\DeclareMathOperator{\Tr}{Tr}
\DeclareMathOperator{\tr}{Tr}
\DeclareMathOperator{\sinc}{sinc}
\newcommand{\ket}[1]{| {#1} \rangle}
\newcommand{\bra}[1]{\langle {#1} |}
\newcommand{\ii}{\mathrm{i}}
\renewcommand{\a}[1]{\hat{a}_{\bm{#1}}}
\newcommand{\ad}[1]{\hat{a}_{\bm{#1}}^\dagger}
\newcommand{\Ua}{\hat U_\textsc{a}}
\newcommand{\Ub}{\hat U_\textsc{b}}
\newcommand{\U}[2]{\hat U_{\textsc{#1}_{#2}}}
\renewcommand*\d[2][]{%
	\mathrm{d}%
	\ifx\relax#1\relax\else
	\rule{-0.02em}{1.5ex}^{#1}\rule{0.08em}{0ex}\!
	\fi
	#2\,
}
\newcommand{\pia}{\hat\pi_\textsc{a}}
\newcommand{\phia}{\hat\phi_\textsc{a}}
\newcommand{\phih}{\hat \phi}
\newcommand{\pih}{\hat \pi}
\newcommand{\om}[1]{\omega_{\bm{#1}}}
\newcommand{\taa}{t_\textsc{a}}
\newcommand{\tbb}{t_\textsc{b}}
\newcommand{\OO}[1]{\hat O_{#1}}
\renewcommand{\P}[1]{\hat P_{#1}}
\newtheorem{mydef}{Definition}
\newtheorem{theorem}{Theorem}
\newtheorem{lemma}{Lemma}
\begin{document}

\title{Transmission of quantum information through quantum fields}
	
\author{Petar Simidzija}
% \email{psimidzija@uwaterloo.ca}
\affiliation{Department of Applied Mathematics, University of Waterloo, Waterloo, Ontario, N2L 3G1, Canada}
\affiliation{Institute for Quantum Computing, University of Waterloo, Waterloo, Ontario, N2L 3G1, Canada}

\author{Aida Ahmadzadegan}
% \email{aida.ahmadzadegan@uwaterloo.ca}
\affiliation{Department of Applied Mathematics, University of Waterloo, Waterloo, Ontario, N2L 3G1, Canada}
\affiliation{Perimeter Institute for Theoretical Physics, Waterloo, Ontario N2L 2Y5, Canada}

\author{Achim Kempf}
% \email{akempf@pitp.ca}
\affiliation{Department of Applied Mathematics, University of Waterloo, Waterloo, Ontario, N2L 3G1, Canada}
\affiliation{Institute for Quantum Computing, University of Waterloo, Waterloo, Ontario, N2L 3G1, Canada}
\affiliation{Perimeter Institute for Theoretical Physics, Waterloo, Ontario N2L 2Y5, Canada}
\affiliation{Department of Physics and Astronomy, University of Waterloo, Waterloo, Ontario, N2L 3G1, Canada}

\author{Eduardo Mart\'in-Mart\'inez}
% \email{emartinmartinez@uwaterloo.ca}
\affiliation{Department of Applied Mathematics, University of Waterloo, Waterloo, Ontario, N2L 3G1, Canada}
\affiliation{Institute for Quantum Computing, University of Waterloo, Waterloo, Ontario, N2L 3G1, Canada}
\affiliation{Perimeter Institute for Theoretical Physics, Waterloo, Ontario N2L 2Y5, Canada}

\begin{abstract}
We investigate the quantum channel consisting of two localized quantum systems that communicate through a scalar quantum field. 
We choose a scalar field rather than a tensor or vector field, such as the electromagnetic field, in order to isolate the situation where the qubits are carried by the field amplitudes themselves rather than, for example, by encoding qubits in the polarization of photons. 
We find that suitable protocols for this type of quantum channel require the careful navigation of several constraints, such as the no-cloning principle, the strong Huygens principle and the tendency of short field-matter couplings to be entanglement breaking. We non-perturbatively construct a protocol for such a quantum channel that possesses maximal quantum capacity. 
\end{abstract}
	
\maketitle

\section{Introduction}
\label{sec:intro}

There exists much technology for sending information through the electromagnetic field but the underlying theory has traditionally been based on classical approximations of the senders, the fields, the receivers and of the transmitted information itself. Fundamentally, it is of course necessary to develop a theory for the transmission of information through fields that is fully quantized. This theory has been partially developed, in particular, with advances in the fields of quantum optics and relativistic quantum information theory. For instance, the rotating wave approximation has been used extensively in quantum optics to study the light-matter coupling~\cite{scully1999}, but it has recently been shown that in certain regimes such an approximation can lead to violations of causality by observers attempting to communicate through a quantum field~\cite{Compagno1989,Compagno1990,Clerk1998,Martinez2015,Funai2019}. It has also been studied how the Unruh-DeWitt (UDW) model~\cite{DeWitt1979} effectively describes the interactions of first quantized systems (such as atoms or qubits) with relativistic quantum fields, and it has been successfully used to explore the inherently relativistic aspects of coupling light to matter \cite{Pablo}. Recent progress in quantum technologies such as quantum cryptography, quantum computing and quantum sensing now lends urgency to the task of developing a theory that treats senders, fields, receivers and the transmitted information fully quantum theoretically.

Several new phenomena that arise from the quantum nature of senders, fields and receivers are already known. For example, it is known that when there are multiple emitters, then the overall radiation field that they emit can be shaped not only by suitably choosing relative phases among the emitters (as usual) but also by suitably pre-entangling the emitters  \cite{QSWAhmadzadegan}. It is also known that, in suitable circumstances, it is possible to transmit information through the field without transmitting energy to the receiver. Instead, there is an energetic expense for receiving the signal which is borne by the receiver. This result can be traced to the wave phenomenon \cite{McLenaghan1974} that the strong Huygens principle (which states that massless fields propagate only on but not in the light cone) can be violated, namely in $(1+1)$- and in $(1 + 2n)$-dimensional Minkowski spacetimes, as well as in all spacetimes of any dimensions if they possesses generic curvature. 

All of the above results about the quantum channel from a quantum sender via a quantum field to a quantum receiver concern the transmission of information that is classical. Concerning the transmission of actual quantum information (which involves the transmission of pre-established entanglement with an ancilla), it is possible and often convenient to encode qubits in spin degrees of freedom of the field, such as the polarization of photons, see e.g.,  \cite{Buttler1998,Hughes2002,Weier2006,Fedrizzi2009,Nauerth2013,Yin2017,Liao2018}, which can be used, for example, for the purposes of quantum key distribution~\cite{Bennett1984,Shor2000}.
However, when the quantum information is encoded in the polarization, interesting intricacies of the propagation of quantum information through a quantum field are bypassed, as the qubit simply `rides' on a carrier photon.  

Fundamentally, these intricacies are important, as quantum information can also be encoded in the field amplitudes themselves, irrespective of polarization, for example, even in a scalar field. In this context, it is known, for example, that if the interaction between the sender (or the receiver) and the quantum field is chosen too short then the interaction tends to become entanglement breaking \cite{Simidzija2017c}. It is also known that the no-cloning theorem \cite{Wootters1982} imposes strong restrictions on the quantum channel capacities among emitters and receivers, see \cite{Jonsson2018} for the case of $(1+1)$-dimensions. 

In the present paper, we will study the quantum channel capacity of the quantum channel consisting of an emitting quantum system that encodes the quantum information into a scalar quantum field in $(3+1)$-dimensions, the propagation and spread of the quantum information through the quantum field, and finally the receipt and decoding of the quantum information by an absorbing quantum system. We will thereby be able to study the fundamental constraints eluded to above, from the tendency for entanglement breaking in short interactions, to the no-cloning principle and the surprising implications of violations of the strong Huygens principle. We will demonstrate how these constraints can be navigated, at least in principle, by constructing a protocol which does achieve maximal quantum channel capacity. 

\section{Setup}\label{sec:setup}

In this section we will introduce background knowledge about the qubit-field interaction model we consider, and review the notion of quantum channel capacity.

\subsection{Unruh-DeWitt model}
\label{sec:UDW}

To study the ability of a pair of localized first-quantized quantized systems, Alice and Bob, such as atoms or molecules, to send and receive quantum information through a quantum field, we will consider a setup \cite{Cliche2010}
which models the quantum system of the sender (as well as thar of the receiver) as a single qubit by considering only two of its energy levels. In our setup, these qubits couple to a scalar field rather than a vector or tensor field because we are here interested not in encoding qubits in polarization degrees of freedom but in the field amplitudes themselves. The coupling between the sender (and receiver) system and the quantum field is modeled as a standard Unruh-DeWitt (UDW) interaction ~\cite{DeWitt1979}
which has been shown to produce qualitatively the same predictions as the full electromagnetic light-matter interaction in situations where the angular momentum exchange between light and matter can be ignored~\cite{Pozas2016,Pablo}. 
Crucially, in order to capture the new phenomena that we are interested in here, we will not make any of the simplifying assumptions that are often used in quantum optics, such as the single-mode or rotating wave approximation common in the Rabi, Glauber or Jaynes-Cummings light-matter interaction models \cite{scully1999}.

Concretely, we let Alice and Bob each locally couple their two-level quantum system (which we will refer to as an Unruh deWitt (UdW) detector) to a scalar quantum field $\hat\phi(\bm x,t)$. We take the free Hamiltonian of qubit $\nu\in\{\text{A},\text{B}\}$ to be $\hat H_\nu=\Omega_\nu\hat\sigma_z$, with $\Omega_\nu$ the energy gap and $\hat\sigma_z$ the Pauli z-operator. We denote the excited and ground states of $\hat H_\nu$ as $\ket{\pm_z}$, with eigenvalues $\pm \Omega_\nu$, respectively\footnote{Throughout this paper we will use the notation $\ket{\pm_s}$ for the eigenstates of $\hat\sigma_s$, $s\in\{x,y,z\}$, with eigenvalues of $\pm 1$. We will alternatively sometimes denote $\ket{+_z}$ as $\ket e$ and $\ket{-_z}$ as $\ket g$, when we want to emphasize that these are the ground and excited states of the free detector Hamiltonian.}. Meanwhile, the field $\hat\phi(\bm x,t)$, and its conjugate momentum $\hat\pi(\bm x,t)$, can be conveniently expanded in plane wave modes as 
\begin{align}
\label{eq:field}
	\hat\phi(\bm{x},t)
	&=
	\int\!\frac{\d[d]{\bm{k}}}{\sqrt{2(2\pi)^d |\bm k|}}\left(\ad{k} e^{\ii(|\bm k| t-\bm{k}\cdot\bm{x})}+\text{H.c.}\right)\!,
	\\
	\label{eq:pi_field}
	\hat\pi(\bm{x},t)
	&=
	\int\!\frac{\d[d]{\bm{k}}}{\sqrt{2(2\pi)^d |\bm k|}}
	\left(\ii|\bm k|\ad{k} e^{\ii(|\bm k| t-\bm{k}\cdot\bm{x})}+\text{H.c.}
	\right)\!,
\end{align}
where the creation and annihilation operators $\ad{k}$ and $\a{k}$ satisfy the canonical commutation relations
\begin{equation}
\label{eq:CCR}
	[\a{k},\a{k'}]= [\ad{k},\ad{k'}]=0, \quad
	[\a{k},\ad{k'}]=\delta^{(d)}(\bm{k}-\bm{k'}).
\end{equation}
Note that we are considering the field and the observers to live in a flat spacetime of arbitrary spatial dimension $d$. We will assume that before it interacts with the detectors, the field is in its ground state $\ket{0}$, defined by the condition $\a k\ket{0}=0$ for all momenta $\bm k\in \mathbb{R}^d$.

We can describe the interaction between comoving inertial detectors $\nu$ (where we use the label $\nu$ to denote Alice and Bob's detectors, $\nu\in\{\text{A},\text{B}\}$) and the field by specifying a local interaction Hamiltonian, $\hat H_{\textsc{i},\nu}(t)$. Working in the interaction picture of time evolution, we will consider interaction Hamiltonians of the form 
\begin{equation}
    \label{eq:H_I}
    \hat H_{\textsc{i},\nu}(t)
    =
    \lambda
    \chi(t)
    \hat m(t)
    \otimes 
    \hat O(t).
\end{equation}
Here $\lambda$ is a coupling strength, $\chi(t)$ is an explicitly time-dependent switching function, and $\hat m(t)$ and $\hat O(t)$ are qubit and field observables which contain an implicit time dependence coming from the fact that we are working in the interaction picture. For instance if the qubit couples through its $\hat\sigma_x$ observable, then $\hat m(t)$ is referred to as the monopole-moment operator, and reads
\begin{equation}
    \label{eq:m}
	\hat{m}(t)=
	\ket{+_z}\bra{-_z} e^{\ii\Omega_\nu t}+
	\ket{-_z}\bra{+_z} e^{-\ii\Omega_\nu t}.
\end{equation}
On the other hand, in order to ensure that the coupling between the observer $\nu$ and the field is physical, the field observable $\hat O(t)$ entering the interaction Hamiltonian must be local in spacetime (restricted to the region where the observer is located). To that end, for an observer coupling to the field at time $t$, we will allow $\hat  O(t)$ to be of the general form
\begin{equation}
\label{eq:Phi_local}
    \hat O(t)=
    \phih[F_1](t)+
    \pih[F_2](t),
\end{equation}
where for any field operator at a spacetime point $\hat{\mathcal{L}}(\bm x,t)$ we defined the smeared operator $\hat{\mathcal{L}}[F](t)$ as
\begin{align}
\label{eq:smeared_operator}
     \hat{\mathcal{L}}[F](t):=\int\d[d]{\bm x}F(\bm x)\hat{\mathcal{L}}(\bm x,t).
\end{align}
In order for $\hat O(t)$ in Eq.~\eqref{eq:Phi_local} to indeed be a local observable for the observer in question, the smearing functions $F_1$ and $F_2$ need to have support in the region of space at time $t$ where the observer is located. Note however that we do not require the observer to couple with the exact same smearing to the $\phih$ and $\pih$ fields; from a physical perspective this is analogous to an extended observer coupling his spatial profile differently to the electric and magnetic fields, as is actually the case in the light-matter interaction \cite{scully1999}. More complicated (i.e. non-linear) local field observables $\hat  O(t)$ could also be considered, although these can lead to divergences that must be carefully dealt with (see, e.g.~\cite{Hummer2016,Sachs2017}).

Following the specification of a qubit-field interaction Hamiltonian $\hat H_{\textsc{i},\nu}(t)$ as in Eq.~\eqref{eq:H_I}, we can formally write down the time-evolution unitary $\hat U$ generated by this Hamiltonian as
\begin{equation}
\label{eq:U}
	\hat{U}
	=
	\mathcal{T}\exp\left[{-\ii\int_{-\infty}^{\infty}\!\!\!\dif t\, \hat H_{\textsc{i},\nu}(t)}\right],
\end{equation}
where the $\mathcal T$ denotes the time-ordering operation. For general detector switching functions $\chi(t)$, the need for time-ordering makes a closed-form evaluation of time-evolved states impossible, instead allowing only for a perturbative approach to the problem. Of course, such a perturbative approach can only be taken when the coupling $\lambda$ between qubit and field is small with respect to the other scales of the problem. 

If instead a strong-coupling result is sought after, there are various non-perturbative methods that can be used (see, among others, \cite{Braun2002,Braun2005,Lin2007,Bruschi2013,Brown2013,Hotta2008,Hotta2009,Pozas2017,Simidzija2017c}). In this study, to attain non-perturbative results, we are going to consider the qubit detector switching function to be $\chi(t)=\sum_{i=1}^n\delta(t-t_i)$ with $t_i< t_{i+1}$, i.e. we require that the detector only interacts with the field at discrete instants in time. Then we can rewrite the time evolution unitary in Eq.~\eqref{eq:U} as $\hat U=\hat U_n\hat U_{n-1}\dots\hat U_1$, where $\hat U_i$ is defined as
\begin{equation}
    \hat U_i = \exp\left[-\ii \lambda\hat m(t_i)\otimes\hat  O(t_i)\right].
\end{equation}
The derivation of this result is shown explicitly in Appendix C of~\cite{Simidzija2017c}. Notice that the time-ordering operation $\mathcal T$ appearing in Eq.~\eqref{eq:U} has served its purpose by ensuring the unitaries $\hat U_i$ act in order of increasing time, and thereafter $\mathcal T$ no longer appears in the expression for $\hat U$. Therefore an exact analytical expression for the time evolved state of the detector-field system can be obtained, as will be exemplified later on. 

Although we will not make use of it in this paper, let us briefly mention here another related technique which allows for a non-perturbative study of relativistic light-matter interactions. Namely, as studied in detail in~\cite{Braun2002,Braun2005,Landulfo2016}, instead of avoiding the issues with time-ordering by requiring the qubit detectors to interact at discrete moments in time, one can alternatively avoid this issue by considering degenerate detectors. Namely, since degenerate detectors have free Hamiltonians which are proportional to the identity, their free time evolution is trivial, and hence this allows one to bypass the difficulties posed by the time ordering operation in a more indirect manner. More concretely, as discussed in \cite{Simidzija2018}, by performing a Magnus expansion~\cite{blanes_magnus_2009} of the evolution unitary $\hat U$ in Eq.~\eqref{eq:U} we find that in the case of a degenerate detector this expansion contains only two non-vanishing terms, thus allowing us to work non-perturbatively.

Now that we have an understanding of the Unruh-DeWitt light-matter interaction model, which we will use to describe the interactions of our observers Alice and Bob to a quantum field, let us now formulate more concretely the main problem of this paper.

Let us consider a tripartite quantum system composed of two qubits (A and B) and a massless scalar field $\hat\phi$. We assume that the field starts in its vacuum state $\ket{0}$ and that qubit B is initially in some predefined state $\hat\rho_{\textsc{b},0}$. Then, we define a quantum channel $\Xi$ from Alice to Bob as a map which takes as input a state $\hat\rho_{\textsc{a},0}$ on Alice's Hilbert space $\mathcal H_\textsc{a}$ and outputs a state $\Xi[\hat\rho_{\textsc{a},0}]$ on Bob's Hilbert space $\mathcal H_\textsc{b}$. Concretely, we write this channel as
\begin{equation}
\label{eq:Xi}
    \Xi[\hat\rho_{\textsc{a},0}]
    :=
    \Tr_{\textsc{a}\phi} 
    \left[
    \hat U_\textsc{b}
    \hat U_\textsc{a}
    \left(
    \hat\rho_{\textsc{a},0}
    \ket{0}\bra{0}
    \hat\rho_{\textsc{b},0}
    \right)
    \hat U_\textsc{a}^\dagger
    \hat U_\textsc{b}^\dagger
    \right],
\end{equation}
where $\hat U_\nu$ is a unitary between qubit $\nu$ and the field. Note that we are requiring qubit A to interact with the field prior to qubit B. A circuit diagram of the channel $\Xi$ is shown in Fig.~\ref{fig:Xi}

\begin{figure}
    \centering
    \includegraphics[width=0.9\linewidth]{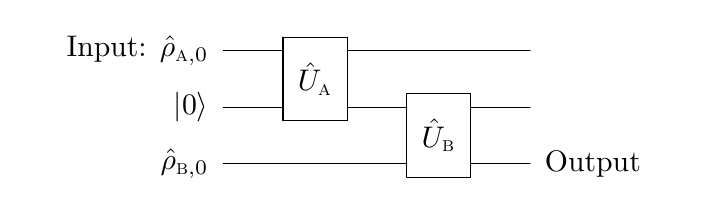}
    \caption[Quantum channel $\Xi$ from Alice to Bob via a quantum field]{Quantum channel $\Xi$ from Alice to Bob via a quantum field, which starts in its ground state $\ket 0$.}
    \label{fig:Xi}
\end{figure}

The problem which we are interested in can now be formulated as follows: Suppose that qubit A has access to the field in some region of spacetime centered at $(\bm x_\textsc{a},t_\textsc{a})$ and that qubit B couples to the field at some later time $t_\textsc{b}>t_\textsc{a}$. We ask two questions:
\begin{enumerate}[leftmargin=*]
    \item Can we construct local unitaries $\hat U_\textsc{a}$ and $\hat U_\textsc{b}$ such that the channel $\Xi$ is able to transmit quantum information from A to B?
    \item Is it possible for $\Xi$ to transmit quantum information perfectly? If so, where in space does qubit B have to be located? In other words: where does the quantum information that Alice puts into the field propagate?
\end{enumerate}
Answering these two questions is the main aim of this paper. However, before we can proceed with this, we must clarify what is meant by a channel being able to transmit quantum information. This is done in the following section.

\subsection{Quantum channel capacity and coherent information}
\label{sec:quant_channel_capacity}

Suppose that we have a channel $\Xi$ mapping the states of some Hilbert space $\mathcal H_\textsc{a}$ to the states of Hilbert space $\mathcal H_\textsc{b}$. In order to quantify the amount of quantum information that can be sent through the channel, we first need to define the \textit{coherent information} $I_c(\hat\rho_{\textsc{a},0},\Xi)$ associated with the channel $\Xi$ and an input to the channel $\hat\rho_{\textsc{a},0}$. To that end, we note that it is always possible to introduce a Hilbert space $\mathcal H_\textsc{c}$ and a state $\ket\psi \in \mathcal H_\textsc{c}\otimes \mathcal H_\textsc{a}$ such that $\hat \rho_{\textsc{a},0}=\Tr_\textsc{c}\ket\psi\bra\psi$; i.e. such that $\ket\psi$ is the \textit{purification} of $\hat\rho_{\textsc{a},0}$. Then, we set 
\begin{equation}
    \hat\rho_\textsc{cb}
    :=
    \left(\mathds 1_\textsc{c}
    \otimes
    \Xi\right)(\ket\psi\bra\psi),
\end{equation}
and we define the coherent information $I_c(\hat\rho_{\textsc{a},0},\Xi)$ as~\cite{Smith2010}
\begin{equation}
\label{eq:coherent_info}
    I_c(\hat\rho_{\textsc{a},0},\Xi)
    :=
    S(\hat\rho_\textsc{b})-
    S(\hat\rho_\textsc{cb}),
\end{equation}
where $\hat\rho_\textsc{b}:=\Xi(\hat\rho_{\textsc{a},0})$ and $S(\hat\rho):=-\Tr\hat\rho\log_2\hat\rho$ is the von Neumann entropy of the state $\hat\rho$, in units of bits. 

Although this definition of the coherent information is somewhat involved, it offers a very intuitive physical interpretation of its meaning. To see this, first note that before we put it through the channel $\Xi$, the system A was initially only entangled with the purifying system C. The coherent information $I_c(\hat\rho_{\textsc{a},0},\Xi)$ then quantifies how much of that entanglement between A and C is transferred to B and C. We can see this directly from Eq.~\eqref{eq:coherent_info}: the first term $S(\hat\rho_\textsc{b})$, being the entropy of system B, quantifies how entangled B is with the rest of the universe (i.e. with C as well as any additional ``channel environment" implicit in the definition of the channel $\Xi$), while the second term, $S(\hat\rho_\textsc{cb})$, is a measure of the entanglement of B and C with the channel environment. Hence the difference $I_c(\hat\rho_{\textsc{a},0},\Xi)=S(\hat\rho_\textsc{b})-S(\hat\rho_\textsc{cb})$ quantifies, in a way analogous to the classical mutual information~\cite{Cover2012},  the amount of correlations (in this case in the form of entanglement) between B and C. In particular, as we prove in Appendix~\ref{Appendix:quantum_info_results}, $I_c(\hat\rho_{\textsc{a},0},\Xi)>0$ only if $\hat\rho_\textsc{cb}$ is a non-separable state on $\mathcal H_\textsc{c}\otimes\mathcal H_\textsc{b}$.

With the above definition of the coherent information, we can now define the \textit{maximal coherent information} of the channel $\Xi$, denoted $I_\text{max}(\Xi)$, as
\begin{equation}
    I_\text{max}(\Xi):=
    \max_{\hat\rho_{\textsc{a},0}}
    I_c(\hat\rho_{\textsc{a},0},\Xi),
\end{equation}
where the maximization is taken over all inputs $\hat\rho_{\textsc{a},0}$ to the channel. Finally, the \textit{quantum channel capacity} $Q(\Xi)$ of the channel $\Xi$ can be defined as~\cite{Smith2010}
\begin{equation}
\label{eq:quantum_channel_capacity}
    Q(\Xi)=\lim_{n\rightarrow\infty}
    \frac{1}{n}I_\text{max}(\Xi^{\otimes n}).
\end{equation}
Physically, the quantum channel capacity $Q(\Xi)$ Can be understood as a sort of `average per instance of the channel' of the coherent information of $n$ copies of the channel. As such, it gives an idea of the amount of quantum information that can be coherently transmitted by the quantum channel, still taking into account that purposefully using $n$ copies of the channel can be better than $n$ independent uses ~\cite{Smith2010}. Unfortunately however, while the formula \eqref{eq:quantum_channel_capacity} provides an intuitive interpretation of the quantum capacity as being the maximal coherent information of many copies of the channel being allowed to work in parallel, it is generally not possible to evaluate this limit and obtain a closed form expression for $Q(\Xi)$. Instead, it is often only possible to compute lower bounds on $Q(\Xi)$, such as~\cite{Lloyd1997,Devetak2005}
\begin{equation}
    Q(\Xi)\ge
    I_\text{max}(\Xi)\ge
    I_c(\hat\rho_{\textsc{a},0},\Xi).
\end{equation}
In particular, computing $I_c(\hat\rho_{\textsc{a},0},\Xi)$ is generally much more straightforward than computing $Q(\Xi)$. For this reason, in what follows we will quantify the ability of the quantum channels proposed to transmit quantum information by computing its coherent information. Although this is only a lower bound on the full channel capacity $Q(\Xi)$, we will show that, in our case, we can construct the channel $\Xi$ so that it has a coherent information arbitrarily close to 1 with respect to the maximally mixed input state $\hat\rho_{\textsc{a},0}=\frac{1}{2}\mathds 1_\textsc{a}$. Since we will consider channels that take as input a single qubit state, we also know that necessarily $Q(\Xi)\le 1$. Hence we will show how to construct a channel transmitting qubits through a quantum field with a quantum channel capacity that is arbitrarily close to its maximal value. In other words, we will construct a perfect, field-mediated quantum channel.

\section{Constructing a perfect quantum channel}
\label{sec:perfect_quantum_channel}

Let us now proceed to construct a quantum channel $\Xi$ of the form~\eqref{eq:Xi}, which allows a (inertial) sender Alice to transmit a qubit of information through a quantum field to some future receiver Bob, who is relative rest with respect to Alice. To that end, let us suppose that Alice is located in a region of space characterized by the smearing function $F_\textsc{a}(\bm x)$, and that she wishes to encode her message into the field at time $t_\textsc{a}$. Then we would like to answer the following questions: where in space should Bob be located at time $t_\textsc{b}>t_\textsc{a}$, and what should the unitaries $\hat U_\textsc{a}$ and $\hat U_\textsc{b}$ be, in order for Bob to recover the entirety of Alice's message?

Before we proceed to answering these questions, let us remind ourselves that our ultimate goal is not just to construct a free space quantum channel between two observers --- indeed, as discussed in the introduction, free space quantum channels have already been realized experimentally over distances as large as 1000+ km, typically in the context of establishing entangled pairs between distant receivers for the purposes of implementing quantum key distribution protocols~\cite{Buttler1998,Hughes2002,Weier2006,Fedrizzi2009,Nauerth2013,Yin2017,Liao2018}. Rather, our goal is to understand, from a fundamental relativistic quantum information perspective, exactly how quantum information is encoded into, propagated through, and decoded out of, a quantum field, and where this quantum information is localized in spacetime. To that end, let us attempt to construct the channel $\Xi$ to be as simple as possible, so that we may try to understand its essential features without being distracted by the unessential ones.

With this additional requirement of simplicity for our channel $\Xi$, let us attempt to generate the time-evolution unitaries $\hat U_\textsc{a}$ and $\hat U_\textsc{b}$ defining the channel out of the simplest possible type of interaction Hamiltonians: those which couple the qubits A and B to the field only at discrete instants in time. Conveniently, as discussed in Sec.~\ref{sec:UDW}, by constructing our channel out of these simple couplings, we have the additional advantage that our analysis will be fully non-perturbative.

In fact, the ability to study our problem non-perturbatively is not merely a nice convenience that arises out of using simple-generated interaction unitaries. More crucially, as we will now prove, if we want the quantum channel $\Xi$ from Alice to Bob to be a perfect quantum channel (i.e. to have a maximum possible quantum channel capacity $Q(\Xi)=1$), then it is \textit{necessary} that the channel is constructed out of non-perturbative couplings between Alice and Bob's qubits and the field.

The proof of this claim is rather trivial. Let us suppose that the coupling between the qubits and field is quantified by some coupling strength $\lambda$, and let us consider the quantum channel capacity $Q(\Xi)$ as a power series in $\lambda$. Clearly if $\lambda=0$, the qubits A and B do not couple to the field, and hence we would have $Q(\Xi)=0$. Hence we can write
\begin{equation}
    Q(\Xi)=0+\mathcal O(\lambda),
\end{equation}
and thus we see that if the coupling $\lambda$ is weak --- in units set by the other scales in the problem --- then $Q(\Xi)$ would, at best, only differ by a small amount (i.e. an amount much less than one) from zero. Hence if we want to have $Q(\Xi)\approx 1$, we must consider couplings $\lambda$ in the non-perturbative regime.

\subsection{Simple-generated couplings}

The simplest possible unitaries $\hat U_\textsc{a}$ and $\hat U_\textsc{b}$ coupling the qubits A and B to the field are of the form
\begin{equation}
\label{eq:U_rank1}
    \hat U_\nu = \exp\left(\ii\lambda_\nu \hat m_\nu\otimes\hat O_\nu\right),
\end{equation}
where we abbreviate $\hat m_\nu:=\hat m(t_\nu)$ for the qubit observables and $\hat O_\nu:=\hat O(t_\nu)$ for the field observables. We will call unitaries of this form ``simple-generated" or ``rank-1 unitaries" because they are the exponential of a simple Schmidt rank-1 tensor product of qubit-field observables. We claim that if either $\hat U_\textsc{a}$ or $\hat U_\textsc{b}$ are of the simple-generated form, then the channel $\Xi$ in Eq.~\eqref{eq:Xi} is not able to transmit quantum information.

To prove this claim, let us first use our requirement that qubit A interacts with the field before qubit B to decompose the channel $\Xi$ from A to B in Eq.~\eqref{eq:Xi} to read $\Xi=\Xi_\textsc{b}\circ\Xi_\textsc{a}$, where $\Xi_\textsc{a}$ is a channel from A to the field defined by
\begin{equation}
    \Xi_\textsc{a}(\hat\rho_\textsc{a})
    :=
    \Tr_\textsc{a}
    \left[
    \Ua
    \left(\hat\rho_\textsc{a}\otimes\ket 0\bra 0
    \right)
    \Ua^\dagger
    \right],
\end{equation}
while $\Xi_\textsc{b}$ is a channel from the field to B defined by
\begin{equation}
    \Xi_\textsc{b}(\hat\rho_\phi)
    :=
    \Tr_\phi
    \left[
    \Ub
    \left(\hat\rho_\phi\otimes\hat\rho_{\textsc{b},0}
    \right)
    \Ub^\dagger
    \right].
\end{equation}
Next, we note that in Ref.~\cite{Simidzija2018} it was shown that unitaries of the simple-generated form necessarily give rise to \textit{entanglement breaking channels}, which are defined as follows:

\begin{mydef}\label{def:ent_breaking_channel}
A channel $\chi$ from states on $\mathcal H_\textsc{a}$ to states on $\mathcal H_\textsc{b}$ is said to be \emph{entanglement breaking} if for any state $\hat\rho_\textsc{ac}$ on $\mathcal H_\textsc{a}\otimes\mathcal H_\textsc{c}$ the state $(\chi\otimes\mathds{1}_\textsc{c})(\hat\rho_\textsc{ac})$ on $\mathcal H_\textsc{b}\otimes\mathcal H_\textsc{c}$ is separable.
\end{mydef}

Thus we find that if $\hat U_\nu$ is simply-generated, then $\Xi_\nu$ is an entanglement breaking channel. However, we previously noted that the coherent information in Eq.~\eqref{eq:coherent_info} is greater than zero only if entanglement can be transferred through the channel. Hence, from the above definition of an entanglement breaking channel, we find that if either $\hat U_\textsc{a}$ or $\hat U_\textsc{b}$ are of the simple-generated form then the maximal coherent information $I_c(\Xi)$ of the channel $\Xi$ is zero.

Notice however, that this does not yet prove that the quantum capacity $Q(\Xi)$ of the channel $\Xi$ is itself zero, since $I_c(\Xi)$ is only a lower bound on $Q(\Xi)$. Let us now show that it is indeed the case that $Q(\Xi)=0$ if either $\Ua$ or $\Ub$ are simply-generated unitaries.

The proof of this stronger claim starts with the expression Eq.~\eqref{eq:quantum_channel_capacity} for the quantum channel capacity of $\Xi$ in terms of the maximal coherent information of $\Xi^{\otimes n}$ for large $n$. Using the fact that $\Xi^{\otimes n} = \Xi_\textsc{b}^{\otimes n}\circ\Xi_\textsc{a}^{\otimes n}$ we note that if either of the $\Xi_\nu^{\otimes n}$ are entanglement breaking then $\Xi^{\otimes n}$ will have a maximal coherent information of zero (as discussed above), and hence $Q(\Xi)=0$. To that end, let us start by proving that $\Xi_\textsc{a}^{\otimes n}$ is entanglement breaking.

To prove that $\Xi_\textsc{a}^{\otimes n}$ is entanglement breaking, we need to show that for any Hilbert space $\mathcal H_\textsc{c}$, and any (potentially entangled) state $\hat \rho_{\textsc{c},\textsc{a}^n}\in\mathcal H_\textsc{c}\otimes\mathcal H_\textsc{a}^{\otimes n}$, the state $(\mathds{1}_\textsc{c}\otimes\Xi_\textsc{a}^{\otimes n})(\hat \rho_{\textsc{c},\textsc{a}^n})$ is a separable state on $\mathcal H_\textsc{c}\otimes\mathcal H_\phi^{\otimes n}$. We thus compute
\begin{align}
\label{eq:ent_breaking_1}
    &(\mathds{1}_\textsc{c}\otimes\Xi_\textsc{a}^{\otimes n})(\hat \rho_{\textsc{c},\textsc{a}^n})
    \\
    =&
    \notag
    \Tr_{\textsc{a}^n}
    \Big[
    \U a 1\dots\U a n
    \left(
    \hat \rho_{\textsc{c},\textsc{a}^n}
    \otimes
    \left(
    \ket 0 \bra 0
    \right)^{\otimes n}
    \right)
    \U a n ^\dagger\dots \U a 1 ^\dagger
    \Big]\!.
\end{align}
To proceed we note that we can express any qubit observable $\hat m_\textsc{a}$ as
\begin{equation}
    \hat m_\textsc{a}
    =
    \sum_{s\in\{\pm\}}
    s \hat P_s,
\end{equation}
where $\hat P_{\pm}:=\ket{\pm_z}\bra{\pm_z}$ are projectors onto the $\pm$ eigenstates of $\hat\sigma_z$. With this decomposition of $\hat m_\textsc{a}$, we can write $\hat U_\textsc{a}$ from Eq.~\eqref{eq:U_rank1} as 
\begin{equation}
\label{eq:controlled_unitary}
    \hat{U}_\textsc{a} 
    = 
    \sum_{s\in\{\pm\}}
    \hat{P}_s\otimes\hat{U}_s,
\end{equation}
where $\hat U_s:=\exp(\ii s \lambda\hat O_\textsc{a})$. Note that when written in this form, it is manifest that a rank-1 unitary $\hat U_\textsc{a}$ can be viewed a controlled unitary, where the state $\ket{s_z}$ of the qubit controls the unitary operation $\hat U_s$ performed on the field. Inserting Eq.~\eqref{eq:controlled_unitary} into Eq.~\eqref{eq:ent_breaking_1} we obtain
\begin{align}
\label{eq:ent_breaking_2}
    &(\mathds{1}_\textsc{c}\otimes\Xi_\textsc{a}^{\otimes n})(\hat \rho_{\textsc{c},\textsc{a}^n})
    \\
    =&
    \sum_{s_i,s'_i}
    \notag
    \Tr_{\textsc{a}^n}
    \Big[
    \hat P_{s_1}\dots\hat P_{s_n}
    \Big(
    \hat \rho_{\textsc{c},\textsc{a}^n}
    \otimes
    \hat U_{s_1}
    \ket 0 \bra 0
    \hat U_{s'_1}^\dagger
    \otimes
    \dots
    \notag\\
    &\hspace{5em}
    \otimes
    \hat U_{s_n}
    \ket 0 \bra 0
    \hat U_{s'_n}^\dagger
    \Big)
    \hat P_{s'_n}\dots\hat P_{s'_1}
    \Big],
\end{align}
where the sub-index $i$ runs from 1 to $n$. Using the cyclicity of the partial trace with respect to the system being traced over, and the fact that $\hat P_{s_i}\hat P_{s'_i} = \hat P_{s_i}\delta_{s_i s'_i}$ since $\hat P_{s_i}$ are projectors, Eq.~\eqref{eq:ent_breaking_2} simplifies to
\begin{align}
\label{eq:ent_breaking_3}
    \nonumber &(\mathds{1}_\textsc{c}\otimes\Xi_\textsc{a}^{\otimes n})(\hat \rho_{\textsc{c},\textsc{a}^n})
    \\
    =&
    \sum_{s_i}
    \notag
    \Tr_{\textsc{a}^n}
    \Big[
    \hat P_{s_1}\dots\hat P_{s_n}
    \Big(
    \hat \rho_{\textsc{c},\textsc{a}^n}
    \otimes
    \hat U_{s_1}
    \ket 0 \bra 0
    \hat U_{s_1}^\dagger
    \otimes
    \dots
    \notag\\
    &\hspace{5em}
    \otimes
    \hat U_{s_n}
    \ket 0 \bra 0
    \hat U_{s_n}^\dagger
    \Big)
    \Big].
\end{align}
Finally, defining
\begin{align}
    p(s_1,\dots,s_n)
    &:=
    \Tr
    \left[
    \hat P_{s_1}\dots\hat P_{s_n}
    \hat \rho_{\textsc{c},\textsc{a}^n}
    \right],
    \\
    \hat \rho_{\textsc{c}}(s_1,\dots,s_n)
    &:=
    \frac{1}{p(s_1,\dots,s_n)}
    \Tr_{\textsc{a}^n}
    \left[
    \hat P_{s_1}\dots\hat P_{s_n}
    \hat \rho_{\textsc{c},\textsc{a}^n}
    \right],
    \\
    \hat \rho_{\phi^n}(s_1,\dots,s_n)
    &:=
    \hat U_{s_1}
    \ket 0 \bra 0
    \hat U_{s_1}^\dagger
    \dots
    \hat U_{s_n}
    \ket 0 \bra 0
    \hat U_{s_n}^\dagger,
\end{align}
where $\hat \rho_{\textsc{c}}(s_1,\dots,s_n)$ is a density matrix on the Hilbert space $\mathcal H_\textsc{c}$, $\hat \rho_{\phi^n}(s_1,\dots,s_n)$ is a density matrix on $\mathcal H_\phi^{\otimes n}$, and $p(s_1,\dots,s_n)\ge 0$ with $\sum_{s_i}p(s_1,\dots,s_n)=1$, we can write $(\mathds{1}_\textsc{c}\otimes\Xi_\textsc{a}^{\otimes n})(\hat \rho_{\textsc{c},\textsc{a}^n})$ as
\begin{equation}
    \sum_{s_i}
    p(s_1,\dots,s_n)\,
    \hat \rho_{\textsc{c}}(s_1,\dots,s_n)
    \otimes
    \hat \rho_{\phi^n}(s_1,\dots,s_n),
\end{equation}
which is a manifestly separable state on $\mathcal H_\textsc{c}\otimes\mathcal H_\phi^{\otimes n}$. Hence the n-qubit channel $\Xi_\textsc{a}^{\otimes n}$ is entanglement breaking for all integers $n>0$.

In an analogous fashion, we can prove that the channel $\Xi_\textsc{b}^{\otimes n}$ from the field to B is entanglement breaking, for all integers $n>0$. The only subtlety with this proof compared to the one we just presented for $\Xi_\textsc{a}^{\otimes n}$, is that, because we are now considering a channel from the field to a qubit, rather than vice-versa, we have to perform a spectral decomposition of the field observable, rather than the qubit observable. To perform this decomposition rigorously is not trivial since the field observables acts on an uncountably infinite dimensional Hilbert space. Nevertheless, since the field observables are self-adjoint operators it is possible to apply the spectral theorem to them, and hence obtain such a spectral decomposition (see Appendix A of Ref.~\cite{Simidzija2018} for full details of this calculation). In this manner, we can show that a rank-1 unitary between a qubit and a field can not only be written as a controlled unitary from the qubit to the field, as in Eq.~\eqref{eq:controlled_unitary}, but also as a controlled unitary from the field to the qubit. In this way, the same argument that was used above to show that $\Xi_\textsc{a}^{\otimes n}$ is an entanglement breaking channel can also be used to arrive at the same conclusion for $\Xi_\textsc{b}^{\otimes n}$.

In conclusion we have shown that if either of the unitaries $\Ua$ or $\Ub$ used to define the channel $\Xi$ are of the simple-generated form, then the quantum channel capacity $Q(\Xi)$ is necessarily zero. In other words, simple-generated couplings between Alice and Bob's qubits to the field are too simple for the purposes of transmitting quantum information through the field. Thus in order to achieve quantum information transmission, we will need to consider more complicated couplings.

\subsection{Encoding a qubit into a field}

In our attempt to construct a channel $\Xi$ that allows a spacetime emitter A to send a qubit through a quantum field to a receiver B, we have come to the important conclusion that such a channel is not possible if either of the observers couple to the field through simple-generated unitaries $\hat U_\nu = \exp(\ii\lambda_\nu \hat m_\nu\otimes\hat O_\nu)$. The natural way to proceed with constructing $\Xi$ is to consider the next simplest types of interaction unitaries $\hat U_\nu$, composed of two rank-1 unitaries performed one after the other, i.e.
\begin{equation}
\label{eq:double_delta_unitary}
    \hat U_\nu = 
    \exp\left(\ii\lambda_{\nu2} \hat m_{\nu2}\otimes\hat O_{\nu2}\right)
    \exp\left(\ii\lambda_{\nu1} \hat m_{\nu2}\otimes\hat O_{\nu1}\right),
\end{equation}
where the $\lambda_{\nu i}$ are coupling constants, the $\hat m_{\nu i}$ are qubit observables and the $\hat  O_{\nu i}$ are field observables. As we will now show, we can indeed find unitaries of this form which ensure that the quantum capacity of the channel $\Xi$ is not only non-zero, but is in fact arbitrarily close to its theoretically maximal value of 1. 

To understand this construction of the unitaries $\Ua$ and $\Ub$, it will be instructive to first consider a simple example of a quantum channel that we know has perfect quantum capacity. To that end, let us consider the setup in which Alice and Bob would like to transmit a qubit of information by encoding and decoding their message into and out of a third qubit, F, rather than into and out of the quantum field. In this case, we know that the channel shown in Fig.~\ref{fig:swap_channel}, which simply swaps qubit A with F, and then F with B, is clearly able to perfectly transfer qubit A to qubit B.

\begin{figure}
    \begin{subfigure}
    \centering
    \includegraphics[width=0.9\linewidth]{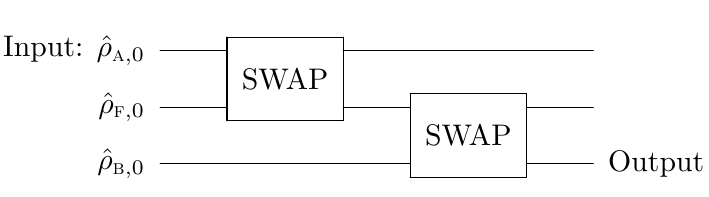}
    \caption[Perfect quantum channel from Alice to Bob via a third qubit, F]{Perfect quantum channel from Alice to Bob via a third qubit, F.}
    \label{fig:swap_channel}
    \end{subfigure}
    \begin{subfigure}
    \centering
   \includegraphics[width=0.9\linewidth]{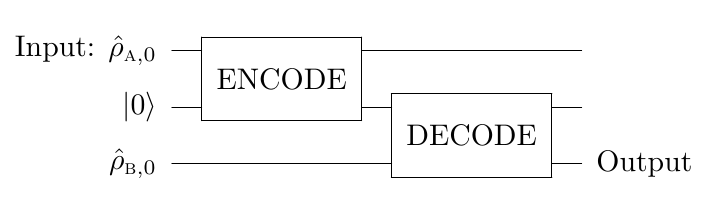}
    \caption[Perfect quantum channel from Alice to Bob via a quantum field, $\hat \phi$]{Perfect quantum channel from Alice to Bob via a quantum field, $\hat \phi$.}
    \label{fig:encode_channel}
    \end{subfigure}
\end{figure}

The key question thus becomes: Is it possible to construct a channel analogous to the one in Fig.~\ref{fig:swap_channel} if we take the intermediary system to be a quantum field $\hat\phi$ rather than a single qubit F? Indeed, our expectation is that it should be possible, if only from the perspective that we could exclusively couple the qubits to a two-dimensional subspace of the field's Hilbert space, which is effectively equivalent to coupling to a third qubit F. However there is one important distinction between this field mediated channel and the qubit mediated channel in Fig.~\ref{fig:swap_channel}. Namely, while in the latter case it makes sense to construct the channel out of SWAP gates, it does not make sense to talk about a SWAP gate between a qubit and a field, since their Hilbert spaces have different dimensions and are thus not isomorphic as vector spaces. In other words, it is not possible to have a one-to-one identification between the two basis states which span the qubit's Hilbert space with the infinitely many basis states spanning the field's Hilbert space --- any such identification would necessarily lose information about the field degrees of freedom. To belabour this point we will call the gate which encodes a qubit into a field an ENCODE gate, rather than a SWAP gate, and similarly for the DECODE gate. Indeed it should be possible to encode (and decode) the two-dimensional Hilbert space of the qubit in the infinite dimensional Hilbert space of the field. Thus the field-mediated quantum channel which we are trying to construct is shown in Fig.~\ref{fig:encode_channel}.

Our question therefore becomes the following: How can an observer Alice, coupling to the field at time $t_\textsc{a}$ and with a spatial extent given by the support of the smearing function $F_\textsc{a}(\bm x)$, encode the state of her qubit into the field? 

Let us suppose that Alice's qubit is in some arbitrary pure state $c_1\ket{+_z}+c_2\ket{-_z}$ and that the field is initially in the vacuum state $\ket 0$. Consider the qubit-field unitary $\Ua$ given by
\begin{equation}
\label{eq:Ua}
    \Ua
    =
    \exp\big(\ii\hat\sigma_x\pia\big)
    \exp\big(\ii\hat\sigma_z\phia\big),
\end{equation}
where $\phia$ and $\pia$ are smeared field observables defined as
\begin{align}
\label{eq:phia}
    \phia&:=\lambda_\phi
    \int\d[d]{\bm x}F_\textsc{a}(\bm x)
    \hat\phi(\bm x,t_\textsc{a}),
    \\
\label{eq:pia}
    \pia&:=\lambda_\pi
    \int\d[d]{\bm x}F_\textsc{a}(\bm x)
    \hat\pi(\bm x,t_\textsc{a}).
\end{align}
Note that the coupling constants have dimensions of $[\lambda_\phi]=L^\frac{d-1}{2}$ and $[\lambda_\pi]=L^\frac{d+1}{2}$ where $d$ is the number of spatial dimensions of spacetime. Also note that the unitary $\Ua$ is generated by the interaction Hamiltonian
\begin{align}
    \hat H_\textsc{i,a}(t)
    =\,
    &\lambda_\phi
    \delta(t-t_\textsc{a}^-)
    \hat m_\textsc{a}^z(t)
    \otimes
    \int\d[d]{\bm x}F_\textsc{a}(\bm x)
    \hat\phi(\bm x,t)\\
    &+
    \lambda_\pi
    \delta(t-t_\textsc{a}^+)
    \hat m_\textsc{a}^x(t)
    \otimes
    \int\d[d]{\bm x}F_\textsc{a}(\bm x)
    \hat\pi(\bm x,t),\notag
\end{align}
where $\hat m_\textsc{a}^z(t)$ is the $\hat\sigma_z$ operator in the interaction picture (so $\hat m_\textsc{a}^z(t)=\hat\sigma_z$ for all $t$ since $\hat\sigma_z$ is proportional to the detector's free Hamiltonian), $\hat m_\textsc{a}^x(t)$ is the $\hat\sigma_x$ operator in the interaction picture (i.e. it is the monopole moment operator from Eq.~\eqref{eq:m}), and the times $t_\textsc{a}^\pm\approx t_\textsc{a}$ are such that $t_\textsc{a}^-$ is just slightly less than $t_\textsc{a}^+$.\footnote{Physically this amounts to saying that $t_\textsc{a}^+=t_\textsc{a}^-+\epsilon$ where $0<\epsilon\ll\Omega^{-1}$, with $\Omega$ being the free frequency of the detector.}

We now claim that the unitary $\Ua$ in Eq.~\eqref{eq:Ua} effectively encodes the state of the qubit in the state of the field, as long as the following two conditions are satisfied:
\begin{align}
\label{eq:strong_coupling_condition}
    &\left(
    \lambda_\phi 
    \int \d[d]{\bm k}
    \big|\tilde F_\textsc{a}(\bm k)\big|^2
    \right)^2
    \gg 
    \frac{1}{2}
    \int \d[d]{\bm k}\om k
    \big|\tilde F_\textsc{a}(\bm k)\big|^2,
    \\
    \label{eq:gamma_condition}
    &\gamma_\textsc{a}:=
    \lambda_\phi
    \lambda_\pi 
    \int\d[d]{\bm k} 
    \big|\tilde F_\textsc{a}(\bm k)\big|^2
    =
    \frac{\pi}{4}\mod 2\pi.
\end{align}
Here and throughout the text we use the notation where $\tilde g(\bm k)$ denotes the Fourier transform of the function $g(\bm x)$, defined by 
\begin{equation}
	\label{eq:FT}
    \tilde{g}(\bm{k})
	:=
	\frac{1}{\sqrt{(2\pi)^d}}\int\d[d]{\bm{x}}
	g(\bm{x})e^{\ii\bm{k}\cdot\bm{x}}.
\end{equation}

This claim is straightforwardly proven by direct calculation. Acting on the initial state $\left(c_1\ket{+_z}+c_2\ket{-_z}\right)\ket 0$ with the rightmost exponential in $\Ua$ results in the state
\begin{align}\label{entangled}
    c_1\ket{+_z}\ket{+\alpha_\textsc{a}}
    +c_2\ket{-_z}\ket{-\alpha_\textsc{a}},
\end{align}
where $\ket{\pm\alpha_\textsc{a}}$ are coherent field states defined by\footnote{For a comprehensive overview of coherent states of a scalar field and the notation used, see Refs.~\cite{Simidzija2017b,Simidzija2017c}}
\begin{equation}
\label{eq:Bell_state}
    \ket{\pm\alpha_\textsc{a}}
    :=
    \exp
    \big(
    \pm\ii\phia
    \big)
    \ket 0.
\end{equation}
It can be shown that the magnitude of the overlap between the two coherent states $\ket{+\alpha_\textsc{a}}$ and $\ket{-\alpha_\textsc{a}}$ is (see Appendix A of~\cite{Simidzija2017c})
\begin{equation}
    \left|
    \langle
    +\alpha_\textsc{a}|
    -\alpha_\textsc{a}
    \rangle
    \right|
    =
    \exp
    \left[
    -\left(\lambda_\phi\right)^2
    \int
    \frac{\d[d]{\bm k}}{2\om k}
    \big|\tilde F_\textsc{a}(\bm k)\big|^2
    \right].
\end{equation}
Hence we see that if $\lambda_\phi\gg1$ in units of the characteristic length scale set by $F_\textsc{a}(\bm x)$, then the field states $\ket{+\alpha_\textsc{a}}$ and $\ket{-\alpha_\textsc{a}}$ are almost orthogonal, and if $|c_1|=|c_2|=1/\sqrt{2}$ the state in Eq.~\eqref{eq:Bell_state} is almost maximally entangled. In other words, a stronger coupling between the qubit and the field results in a more correlated (i.e. entangled) state of the two systems.

The final step in evaluating the action of the unitary $\Ua$ on the initial state of the qubit-field system $\left(c_1\ket{+_z}+c_2\ket{-_z}\right)\ket 0$ is to apply the unitary $\exp(\ii\hat\sigma_x\pia)$ to the entangled state $c_1\ket{+_z}\ket{+\alpha_\textsc{a}}+c_2\ket{-_z}\ket{-\alpha_\textsc{a}}$. To perform this calculation, let us first apply the field observable $\pia$ to the coherent state $\ket{\pm\alpha_\textsc{a}}$. To that end, using the Baker-Campbell-Hausdorff lemma we can straightforwardly prove the identity
\begin{equation}
    \exp\big(\pm\ii\pia \big)\a k\exp\big(\mp\ii\pia\big)
    =
    \a k+\alpha_\textsc{a}(\bm k)\mathds 1,
\end{equation}
where the \textit{coherent amplitude} $\alpha_\nu(\bm k)$ is defined as
\begin{equation}
    \alpha_\nu(\bm k)
    =
    \frac{\lambda_\nu^\phi}{\sqrt{2\om k}}
    \tilde F^*_\nu(\bm k)e^{\ii\om k t_\nu},
\end{equation}
and hence we find that
\begin{equation}
\label{eq:eigenvalue_equation}
    \pia\ket{\pm\alpha_\textsc{a}}
    =
    \pm\gamma_\textsc{a}\ket{\pm\alpha_\textsc{a}}
    +
    \exp\big(\pm\ii\pia\big)\pia\ket{0},
\end{equation}
where $\gamma_\textsc{a}$ is defined in Eq.~\eqref{eq:gamma_condition}. Hence we see that $\pia\ket{\pm\alpha_\textsc{a}}$ is the sum of two terms, and in particular we find that if $\gamma_\textsc{a}^2\gg\bra{0}\pia^2\ket{0}$ then \begin{equation}
\label{eq:eigenvalue_equation_2}
    \pia\ket{\pm\alpha_\textsc{a}}
    \approx
    \pm\gamma_\textsc{a}\ket{\pm\alpha_\textsc{a}}.
\end{equation}
In other words, if $\gamma_\textsc{a}^2\gg\bra{0}\pia^2\ket{0}$ --- which is exactly equivalent to the condition~\eqref{eq:strong_coupling_condition} which we are assuming to hold --- then the field coherent states $\ket{\pm\alpha_\textsc{a}}$ are very approximately eigenstates of the observable $\pia$ with eigenvalues $\pm\gamma_\textsc{a}$. If additionally condition~\eqref{eq:gamma_condition} is satisfied, then we find
\begin{align}
    &\Ua \left(c_1\ket{+_z}+c_2\ket{-_z}\right)\ket 0\notag\\
    =\,&
    \exp\big(\ii\hat\sigma_x\pia\big)
    \left(
    c_1\ket{+_z}\ket{+\alpha_\textsc{a}}
    +c_2\ket{-_z}\ket{-\alpha_\textsc{a}}
    \right)
    \notag\\
    \approx\,&
    c_1 \exp\left(+\ii\frac{\pi}{4}\hat\sigma_x\right) \ket{+_z}\ket{+\alpha_\textsc{a}}
    \notag\\
    &\hspace{0.2cm}+
    c_2 \exp\left(-\ii\frac{\pi}{4}\hat\sigma_x\right) \ket{-_z}\ket{-\alpha_\textsc{a}}
    \notag\\
    =\,&
    \ket{+_y}
    \left(
    c_1\ket{+\alpha_\textsc{a}}
    -\ii c_2\ket{-\alpha_\textsc{a}}
    \right).
\end{align}
Note that in the second line we have used the identities $\exp(+\ii\frac{\pi}{4}\hat\sigma_x) \ket{+_z}=\ket{+_y}$ and $\exp(-\ii\frac{\pi}{4}\hat\sigma_x) \ket{+_z}=-\ii\ket{-_y}$, which simply state that we can perform Bloch sphere rotations of the eigenstates of $\hat\sigma_z$ into the positive eigenvalue eigenstate $\ket{+_y}$ of $\hat\sigma_y$ by applying rotation unitaries generated by $\hat\sigma_x$. Hence the unitary $\hat U_\textsc{a}$ has succeeded in encoding the orthogonal qubit superposition $c_1\ket{+_z}+c_2\ket{-_z}$ into an (almost) orthogonal superposition of coherent field states, $c_1\ket{+\alpha_\textsc{a}}-\ii c_2\ket{-\alpha_\textsc{a}}$.

The results of this section can be summarized as follows. An observer Alice coupling locally to a quantum field at a time $t_\textsc{a}$ can effectively encode her qubit into the field by implementing the unitary $\Ua=\exp\big(\ii\hat\sigma_x\pia\big)\exp\big(\ii\hat\sigma_z\phia\big)$, as long as the conditions \eqref{eq:strong_coupling_condition} and \eqref{eq:gamma_condition} are satisfied. For instance, if Alice's qubit starts in the equally weighted superposition $\frac{1}{\sqrt{2}}(\ket{+_z}+\ket{-_z})$, then, as long as \eqref{eq:strong_coupling_condition} is satisfied, the rightmost exponential in $\Ua$ will maximally entangle Alice's qubit with the field. Following this, and assuming that \eqref{eq:gamma_condition} is satisfied, the leftmost exponential in $\Ua$ will then use the state of the field to perform a controlled rotation in the Bloch sphere of the qubit, thus leaving the field in an equally weighted, orthogonal superposition of coherent states. In other words, the unitary $\hat U_\textsc{a}$ succeeds, through local operations, in encoding Alice's qubit into the field.

\subsection{Decoding a qubit out of a field}

Having understood how Alice can ENCODE her qubit of information into the field, the final step in constructing the field-mediated quantum channel from Alice to Bob, as depicted in Fig.~\ref{fig:encode_channel}, is to construct the DECODE gate that allows Bob to recover Alice's message from the field. The most straightforward way to proceed is to note that the DECODE gate should simply be the inverse of the ENCODE gate. Thus, since we know the unitary $\Ua=\exp\big(\ii\hat\sigma_x\pia\big)\exp\big(\ii\hat\sigma_z\phia\big)$ implementing the encode gate, we also know that the inverse unitary $\Ua^{-1}=\Ua^\dagger=\exp\big(-\ii\hat\sigma_z\phia\big)\exp\big(-\ii\hat\sigma_x\pia\big)$ will implement the DECODE gate. We can now simply set the unitary $\Ub$ in Fig.~\ref{fig:Xi}, which acts on detector B and the field, to be the unitary $\Ua^\dagger$ with the understanding that the qubit observables $\hat\sigma_x$ and $\hat\sigma_z$ now act on the Hilbert space $\mathcal H_\textsc{b}$ rather than $\mathcal H_\textsc{a}$.

Note however that there is a problem with this construction of the decoding unitary $\Ub$. Namely, while we have modified the qubit observables in $\Ub$ from the ones in $\Ua^\dagger$ so that now they act on $\mathcal H_\textsc{b}$ rather than $\mathcal H_\textsc{a}$, the field observables $\phia$ and $\pia$ appearing in $\Ub$ are still defined at the time $t_\textsc{a}$ (c.f. Eqs.~\eqref{eq:phia} and \eqref{eq:pia}). But in order for Bob to implement $\Ub$ at a later time $t_\textsc{b}$, he needs to couple his qubit to field observables defined at the time $t_\textsc{b}$, not at $t_\textsc{a}$. 

We will now solve this problem by proving a mathematical result which expresses the field observables $\phia$ and $\pia$ as observables at time $t_\textsc{b}$. Fundamentally, this result arises due to the fact that the field $\phih(\bm x,t)$ is by definition a solution to the wave equation, which, being a hyperbolic PDE, has a well defined initial value formulation that allows solutions at time $t_\textsc{a}$ to be propagated to solutions at time $t_\textsc{b}$. More concretely:

\begin{theorem}\label{theorem:AQFT}
Let $\phih(\bm x,t)$ be a free field in any spacetime dimension with mode expansion given by Eq.~\eqref{eq:field}. Let $\pih(\bm x,t)$ be the conjugate momentum field, and let $F(\bm x)$ be any smearing function. Then
\begin{align}
\label{eq:phi_AQFT_identity}
    \phih[F](\taa)
    &=
    \phih[F_2](\tbb)+\pih[F_1](\tbb), 
    \\
\label{eq:pi_AQFT_identity}
    \pih[F](\taa)
    &=
    \phih[F_3](\tbb)+\pih[F_2](\tbb),
\end{align}
where the $F_i(\bm x)$ are related to $F(\bm x)$ via their Fourier transforms as
\begin{align}
\label{eq:F1}
    \tilde F_1(\bm k)
    &=
    \tilde F(\bm k)\sinc(\Delta\om k)(-\Delta),\\
    \label{eq:F2}
    \tilde F_2(\bm k)
    &=
    \tilde F(\bm k)\cos(\Delta\om k),\\
    \label{eq:F3}
    \tilde F_3(\bm k)
    &=
    \tilde F(\bm k)\sin(\Delta\om k)\om k,
\end{align}
and where $\Delta:=\tbb-\taa$.
\end{theorem}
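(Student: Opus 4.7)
The plan is to prove both identities by substituting the mode expansions of $\hat\phi$ and $\hat\pi$ into each side, and then matching the coefficients of $\hat a^\dagger_{\bm k}$ and $\hat a_{\bm k}$ mode by mode. Since the creation and annihilation operators are independent at each $\bm k$ (by the canonical commutation relations~\eqref{eq:CCR}), it suffices to verify the desired equalities at the level of these coefficients, which reduces the operator identity to a purely algebraic identity about complex-valued Fourier transforms. The core observation is that the whole theorem is essentially a restatement of the Heisenberg-picture time evolution: a free-field solution at time $t_\textsc{a}$ can be re-expressed in terms of its Cauchy data $(\phih,\pih)$ on the later hypersurface $t=t_\textsc{b}$.

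Concretely, first I would substitute \eqref{eq:field}--\eqref{eq:pi_field} into \eqref{eq:smeared_operator} to obtain, after performing the $\bm x$-integrals and using the Fourier convention \eqref{eq:FT}, expressions of the form
\begin{align}
    \hat\phi[F](t) &= \int\frac{\d[d]{\bm k}}{\sqrt{2\om k}}\Bigl(\tilde F^*(\bm k)\,e^{\ii\om k t}\,\ad{k}+\text{H.c.}\Bigr), \notag\\
    \hat\pi[F](t) &= \int\frac{\d[d]{\bm k}}{\sqrt{2\om k}}\Bigl(\ii\om k\,\tilde F^*(\bm k)\,e^{\ii\om k t}\,\ad{k}+\text{H.c.}\Bigr).
\end{align}
Writing the left-hand sides of \eqref{eq:phi_AQFT_identity} and \eqref{eq:pi_AQFT_identity} in this form, I would then factor out the shift $e^{\ii\om k \taa}=e^{\ii\om k \tbb}e^{-\ii\om k\Delta}$ and split the exponential by Euler's formula, $e^{-\ii\om k\Delta}=\cos(\om k\Delta)-\ii\sin(\om k\Delta)$.

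For \eqref{eq:phi_AQFT_identity}, the coefficient of $\ad{k}$ in $\phih[F](\taa)$ becomes $\tilde F^*(\bm k)\bigl(\cos(\om k\Delta)-\ii\sin(\om k\Delta)\bigr)e^{\ii\om k\tbb}/\sqrt{2\om k}$. On the right-hand side, using \eqref{eq:F1}--\eqref{eq:F2} and noting that $\cos$ and $\sinc$ are real so complex conjugation only touches $\tilde F$, the $\phih[F_2](\tbb)$ term contributes the $\cos(\om k\Delta)$ piece, while $\pih[F_1](\tbb)$ contributes $\ii\om k\tilde F^*(\bm k)\sinc(\Delta\om k)(-\Delta)e^{\ii\om k\tbb}/\sqrt{2\om k}=-\ii\sin(\om k\Delta)\tilde F^*(\bm k)e^{\ii\om k\tbb}/\sqrt{2\om k}$, using $\om k\Delta\,\sinc(\om k\Delta)=\sin(\om k\Delta)$. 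These two pieces sum precisely to the left-hand coefficient. The H.c.\ piece follows automatically (or can be checked by conjugating the whole matching). The proof of \eqref{eq:pi_AQFT_identity} is identical in structure: the extra factor $\ii\om k$ on the left is reproduced on the right by the combination $\om k\sin(\om k\Delta)+\ii\om k\cos(\om k\Delta)=\ii\om k\,e^{-\ii\om k\Delta}$ coming from \eqref{eq:F3} and \eqref{eq:F2}.

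There is no genuine obstacle; the only points requiring care are the Fourier-transform convention (so that the $\bm x$-integrals of the plane waves against $F(\bm x)$ produce $\tilde F^*(\bm k)$ rather than $\tilde F(\bm k)$, compatibly with the H.c.\ notation in \eqref{eq:field}--\eqref{eq:pi_field}) and the tracking of the real vs.\ imaginary parts when collecting the $\cos$ and $\sin$ contributions. I would therefore present the argument as a one-line verification at the level of mode coefficients, after a brief bookkeeping computation of the smeared field and momentum in momentum space.
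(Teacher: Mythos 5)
Your proposal is correct and is essentially the paper's own argument: both substitute the mode expansion into the smeared operators, factor the phase as $e^{\ii\om k\Delta}e^{-\ii\om k\tbb}$ (equivalently $e^{-\ii\om k\Delta}$ on the conjugate piece), split it with Euler's formula, and identify the $\cos$ and $\sin$ contributions with $\phih$ and $\pih$ at $\tbb$ smeared by $F_1,F_2,F_3$ as defined in Eqs.~\eqref{eq:F1}--\eqref{eq:F3}. Matching coefficients of $\ad{k}$ (with the H.c.\ following for real $F$) versus the paper's direct reassembly of the momentum-space integrals into position-space smeared operators is only a cosmetic difference.
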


\begin{proof}
We will explicitly prove Eq.~\eqref{eq:phi_AQFT_identity}, while Eq.~\eqref{eq:pi_AQFT_identity} is proven analogously. Starting from the mode expansion for $\phih(\bm x,t)$ given in Eq.~\eqref{eq:field} we get:
\begin{align}
    \phih[F](\taa)
    &=\!\!\!
    \int\!\!\d[d]{\bm x}\!F(\bm x)\!\!
    \int\!\!\!\frac{\d[d]{\bm k}}{\sqrt{2(2\pi)^d\om k}}\!
    \left(\!\a k e^{-\ii(\om k \taa-\bm k\cdot\bm x)}\!+\!\text{H.c.}\!\right)\notag\\
    &=
    \int\frac{\d[d]{\bm k}}{\sqrt{2\om k}}
    \left(\tilde F(\bm k)\a k e^{-\ii\om k \taa}+\text{H.c.}\right),
\end{align}
with $\tilde F(\bm k)$ the Fourier transform of $F(\bm x)$ as defined in Eq.~\eqref{eq:FT}. Then, using $\Delta:=\tbb-\taa$ we obtain
\begin{align}
    \phih[F](\taa)
    &=
    \int\frac{\d[d]{\bm k}}{\sqrt{2\om k}}
    \left(\tilde F(\bm k)\a k e^{\ii\om k \Delta}e^{-\ii\om k\tbb}+\text{H.c.}\right)
    \notag\\
    &=
    \int\frac{\d[d]{\bm k}}{\sqrt{2\om k}}
    \Big(\tilde F(\bm k)
    \left[\cos(\om k\Delta)+\ii\sin(\om k\Delta)\right]
    \notag\\
    &\hspace{2cm}\times\a k e^{-\ii\om k\tbb}+\text{H.c.}\Big).
\end{align}
By introducing the Fourier transforms $\tilde F_1(\bm k)$ and $\tilde F_2(\bm k)$ as defined in Eqs.~\eqref{eq:F1} and \eqref{eq:F2}, we can write $\phih[F](\taa)$ as
\begin{align}
    &\int\frac{\d[d]{\bm k}}{\sqrt{2\om k}}
    \left(\tilde F_2(\bm k)\a k e^{-\ii\om k\tbb}+\text{H.c.}\right)\notag\\
    &\hspace{0.02cm}+
    \int\frac{\d[d]{\bm k}}{\sqrt{2\om k}}
    \left(-\ii\om k\tilde F_1(\bm k)\a k e^{-\ii\om k\tbb}+\text{H.c.}\right)
    \notag\\
    =\,&
    \!\!\!\int\!\!\d[d]{\bm x}F_2(\bm x)\!\!
    \int\!\!\!\frac{\d[d]{\bm k}}{\sqrt{2(2\pi)^d\om k}}\!
    \left(\!\a k e^{-\ii(\om k \tbb-\bm k\cdot\bm x)}\!+\!\text{H.c.}\!\right)
    \notag\\
    &\hspace{0.02cm}
    +
    \!\!\!\int\!\!\d[d]{\bm x}F_1(\bm x)\!\!
    \int\!\!\!\frac{\d[d]{\bm k}}{\sqrt{2(2\pi)^d\om k}}\!
    \left(\!-\ii\om k\a k e^{-\ii(\om k \tbb-\bm k\cdot\bm x)}\!+\!\text{H.c.}\!\right)
    \notag\\
    =\,&
    \phih[F_2](\tbb)+\pih[F_1](\tbb),
\end{align}
which proves Eq.~\eqref{eq:phi_AQFT_identity}.
\end{proof}

With this mathematical result at hand, we can now write the unitary $\Ub$ in Fig.~\ref{fig:Xi} --- which decodes Alice's qubit out of the field and onto Bob's detector --- in terms of field observables at the time $\tbb$. Namely, the theorem allows us to write the field observables $\phia$ and $\pia$ defined in Eqs.~\eqref{eq:phia} and \eqref{eq:pia} as
\begin{align}
    \phia
    &=
    \lambda_\phi
    \phih[F_{\textsc{b}2}](\tbb)
    +
    \lambda_\phi
    \pih[F_{\textsc{b}1}](\tbb),
    \notag\\
    \pia
    &=
    \lambda_\pi
    \phih[F_{\textsc{b}3}](\tbb)
    +
    \lambda_\pi
    \pih[F_{\textsc{b}2}](\tbb),
\end{align}
where Bob's smearing functions $F_{\textsc{b}i}(\bm x)$ are defined in terms of Alice's smearing $F_\textsc{a}$ through their Fourier transforms,
\begin{align}
\label{eq:FB1}
    \tilde F_{\textsc{b}1}(\bm k)
    &=
    \tilde F_\textsc{a}(\bm k)\sinc(\Delta\om k)(-\Delta),\\
    \label{eq:FB2}
    \tilde F_{\textsc{b}2}(\bm k)
    &=
    \tilde F_\textsc{a}(\bm k)\cos(\Delta\om k),\\
    \label{eq:FB3}
    \tilde F_{\textsc{b}3}(\bm k)
    &=
    \tilde F_\textsc{a}(\bm k)\sin(\Delta\om k)\om k.
\end{align}
Hence the unitary $\Ub$, defined by
\begin{align}
    \Ub
    &=
    \exp\big(-\ii\hat\sigma_z\phia\big)\exp\big(-\ii\hat\sigma_x\pia\big),
\end{align}
can now be alternatively defined in terms of field observables at time $\tbb$, namely
\begin{align}
    \Ub
    &=
    \exp
    \left[-\ii\lambda_\phi\hat\sigma_z\left(
    \phih[F_{\textsc{b}2}](\tbb)
    +
    \pih[F_{\textsc{b}1}](\tbb)\right)
    \right]\notag\\
    &\hspace{0.3cm}\times
    \exp
    \left[-\ii\lambda_\pi\hat\sigma_x\left(
    \phih[F_{\textsc{b}3}](\tbb)
    +
    \pih[F_{\textsc{b}2}](\tbb)\right)
    \right].
    \label{eq:Ub}
\end{align}

In summary, we have succeeded in constructing the quantum channel shown in Fig.~\ref{fig:Xi}, which allows Alice to perfectly transmit a qubit through a quantum field to Bob. The quantum channel consists of two steps: 
\begin{enumerate}[leftmargin=*]
    \item First, at time $t=\taa$, Alice encodes her qubit state in a spatial region of the field characterized by $F_\textsc{a}(\bm x)$ by implementing the unitary $\Ua$ given in Eq.~\eqref{eq:Ua}.
    \item Then, at a later time $t=\tbb$, Bob decodes the qubit from the field by coupling with the unitary $\Ub$ given in Eq.~\eqref{eq:Ub}. In order for Bob to be able to implement this unitary, his detector must be smeared in a spatial region that contains the supports of the functions $F_{\textsc{b}1}(\bm x)$, $F_{\textsc{b}2}(\bm x)$, and $F_{\textsc{b}3}(\bm x)$ defined by Eqs.~\eqref{eq:FB1}-\eqref{eq:FB3}.
\end{enumerate}
Additionally, in order for the channel to succeed, the conditions \eqref{eq:strong_coupling_condition} and \eqref{eq:gamma_condition} on the coupling strengths $\lambda_\phi$ and $\lambda_\pi$ must be satisfied. Physically, Eq.~\eqref{eq:strong_coupling_condition} is a strong-coupling condition which ensures that Alice's qubit first gets maximally entangled with orthogonal coherent field states, while Eq.~\eqref{eq:gamma_condition} is a fine-tuning condition which ensures that Alice's qubit is then rotated by the right amount in the Bloch sphere so that it gets completely unentangled from the field. Together, these conditions ensure that the encoding gate (and hence the decoding gate, which is just the inverse encoding gate) are implemented successfully. In particular we note that, as was discussed above, a strong (i.e. non-perturbative) coupling of detectors to the field is necessary in order for the field-mediated quantum channel from Alice to Bob to have maximal quantum channel capacity.

Despite our successes so far, there still remain two pertinent issues that must be addressed before one can be fully satisfied with our construction of a perfect, field-mediated quantum channel from Alice to Bob. First, it should be verified, without the use of any approximations (such as the one in Eq.~\eqref{eq:eigenvalue_equation_2}), that our supposedly perfect quantum channel $\Xi$ indeed has a maximal quantum channel capacity of $\mathcal Q(\Xi)=1$. And second, the smearing functions $F_{\textsc{b}i}(\bm x)$ are defined in terms of their Fourier transforms, and hence it is presently not clear where in space Bob needs to be located in order to receive Alice's quantum message, which is crucial for our study. We will successively address these two remaining issues in Sec.~\ref{sec:testing_quantum_channel} and Sec.~\ref{sec:where_does_q_info_propagate}.

\section{Numerical test of the perfect quantum channel}
\label{sec:testing_quantum_channel}

Let us verify that the channel $\Xi$ which we constructed in the previous section --- shown in Fig.~\ref{fig:Xi} with $\Ua$ and $\Ub$ given by Eqs.~\eqref{eq:Ua} and \eqref{eq:Ub} --- can indeed perfectly transmit quantum information from Alice to Bob. For convenience we will assume that Bob's initial state is $\ket{+_y}$, and that Alice's initial state (i.e. the input to the channel), is the maximally mixed state, $\hat\rho_{\textsc{a},0}=\frac{1}{2}\mathds 1$. As discussed in Sec.~\ref{sec:setup}, we will compute a lower bound on the quantum channel capacity $Q(\Xi)$ by computing the coherent information $I_c(\hat\rho_{\textsc{a},0},\Xi)$ of the channel $\Xi$ and the input state $\hat\rho_{\textsc{a},0}$.

Recall that to compute $I_c(\hat\rho_{\textsc{a},0},\Xi)$, we must first purify the input to the channel, i.e. the maximally mixed state $\hat\rho_{\textsc{a},0}$. To that end, we suppose that the initial state of Alice is entangled with some third qubit C, and that the joint state of C and Alice is given by the maximally entangled pure state $\ket{\psi_\textsc{ca}}=\frac{1}{\sqrt{2}}(\ket{-_z}\ket{+_z}+\ket{+_z}\ket{-_z})_\textsc{ca}$. %Indeed, we can verify that this is a purification by noting that if we trace out system C the resulting state of Alice is $\hat\rho_{\textsc{a},0}$.

Next, in order to compute $I_c(\hat\rho_{\textsc{a},0},\Xi)$, we must evaluate the state 
\begin{equation}
    \hat\rho_\textsc{cb}
    :=
    \left(
    \mathds 1_\textsc{c}\otimes\Xi
    \right)
    \left(
    \ket{\psi_\textsc{ca}}\bra{\psi_\textsc{ca}}
    \right),
\end{equation}
on $\mathcal H_\textsc{c}\otimes \mathcal H_\textsc{b}$. Following this we can easily compute the coherent information through Eq.~\eqref{eq:coherent_info}, i.e. as $I_c(\hat\rho_{\textsc{a},0},\Xi):=S(\hat\rho_\textsc{b})-S(\hat\rho_\textsc{cb})$. Writing $\Xi$ in terms of the unitaries $\Ua$ and $\Ub$, we obtain
\begin{equation}
\label{eq:rho_cb}
    \hat\rho_\textsc{cb}
    =
    \tr_{\textsc{a}\phi}\!
    \left[
    \Ub^\dagger\Ua^\dagger
    \left(
    \ket{\psi_\textsc{ca}}\bra{\psi_\textsc{ca}}
    \!
    \otimes
    \!
    \ket{0}\bra{0}
    \!
    \otimes
    \!
    \ket{+_y}\bra{+_y}
    \right)
    \hat U_\textsc{a}^\dagger
    \hat U_\textsc{b}^\dagger
    \right]\!.
\end{equation}
The simplest way to proceed with the computation of this density matrix is to decompose the unitaries $\Ua$ and $\Ub$ into products of controlled unitaries from the qubits A and B onto the field. Namely, for $\Ua$ we write
\begin{align}
\label{eq:Ua_controlled}
    \Ua
    &=
    \exp\big(\ii\hat\sigma_x\pia\big)
    \exp\big(\ii\hat\sigma_z\phia\big)
    \notag\\
    &=
    \sum_{x,z\in\{\pm\}}
    \P x\P z \otimes e^{\ii x\pia}e^{\ii z\phia},
\end{align}
where $\P x$ and $\P z$ are the projectors onto the eigenstates of $\hat\sigma_x$ and $\hat\sigma_z$ (note that to simplify notation we are using the dummy summation index $x$ or $z$ on the $\hat P$ to denote what operator the projector is associated with). Written in this form we see that the action of $\Ua$ is to unitarily evolve the field state with a unitary that is dependent on the outcome of a $\hat\sigma_z$ measurement of the qubit A, and then to do the same thing for a $\hat\sigma_x$ measurement. In other words $\Ua$ is a product of two controlled unitaries, from A to the field.

We can perform the same kind of decomposition for the unitary $\Ub$ by starting with the expression Eq.~\eqref{eq:Ub}. However, it is more convenient to write $\Ub$ in the way it was initially defined, i.e. as $\Ua^{-1}=\Ua^\dagger$ with the understanding that the qubit observables are now observables on $\mathcal H_\textsc{b}$ rather than $\mathcal H_\textsc{a}$. Hence, from Eq.~\eqref{eq:Ua_controlled} we directly obtain
\begin{align}
\label{eq:Ub_controlled}
    \Ub
    &=
    \sum_{x,z\in\{\pm\}}
    \P z \P x \otimes e^{-\ii z\phia}e^{-\ii x\pia},
\end{align}
where the self-adjoint projectors $\P x$ and $\P z$ are associated with the Pauli operators $\hat\sigma_x$ and $\hat\sigma_z$ on $\mathcal H_\textsc{b}$.

Substituting Eqs.~\eqref{eq:Ua_controlled} and \eqref{eq:Ub_controlled} for $\Ua$ and $\Ub$ into Eq.~\eqref{eq:rho_cb} for $\hat\rho_\textsc{cb}$, and writing $\ket{\psi_\textsc{ca}}=\frac{1}{\sqrt{2}}\sum_j\ket{-j_z}\ket{j_z}$ with $j\in\{\pm\}$, we get
\begin{align}
    \hat\rho_\textsc{cb}
    &=
    \frac{1}{2}
    \sum_{j,k,x_i,z_i}
    \bra{0}
    e^{-\ii z_1\phia}
    e^{-\ii x_1\pia}
    e^{\ii x_2\pia}
    e^{\ii z_2\phia}\notag\\
    &\hspace{1cm}\times 
    e^{-\ii z_3\phia}
    e^{-\ii x_3\pia}
    e^{\ii x_4\pia}
    e^{\ii z_4\phia}
    \ket{0}
    \notag\\
    &\hspace{1cm}\times
    \bra{k_z}\P{z_1}\P{x_1}\P{x_4}\P{z_4}\ket{j_z}_\textsc{a}
    \ket{-j_z}_\textsc{c}\bra{-k_z}\notag\\
    &\hspace{1cm}\otimes
    \P{z_3}\P{x_3}\ket{+_y}_\textsc{b}\bra{+_y}
    \P{x_2}\P{z_2},
\end{align}
where $x_i$ stands for $x_1,x_2,x_3,x_4$, and similarly for $z_i$, and where all of the summation variables run over the set $\{+1,-1\}$, such that there are $2^{10}$ terms in the entire sum. This expression can straightforwardly be evaluated by a computer as long as we can first simplify the field expectation value $\bra{0}\dots\ket{0}$. In order to do so, let us first redefine the summation indices by $x_1\mapsto -x_1, z_1\mapsto -z_1, x_3\mapsto -x_3$ and $z_3\mapsto -z_3$, such that the expression for $\hat\rho_\textsc{cb}$ reads
\begin{align}
    \hat\rho_\textsc{cb}
    &=
    \frac{1}{2}
    \sum_{j,k,x_i,z_i}
    \bra{0}
    e^{\ii z_1\phia}
    e^{\ii x_1\pia}
    e^{\ii x_2\pia}
    e^{\ii z_2\phia}\notag\\
    &\hspace{1cm}\times
    e^{\ii z_3\phia}
    e^{\ii x_3\pia}
    e^{\ii x_4\pia}
    e^{\ii z_4\phia}
    \ket{0}
    \notag\\
    &\hspace{1cm}\times
    \bra{k_z}\P{-z_1}\P{-x_1}\P{x_4}\P{z_4}\ket{j_z}_\textsc{a}
    \ket{-j_z}_\textsc{c}\bra{-k_z}\notag\\
    &\hspace{1cm}\otimes
    \P{-z_3}\P{-x_3}\ket{+_y}_\textsc{b}\bra{+_y}
    \P{x_2}\P{z_2}.
\end{align}

Using the Baker-Campbell-Hausdorff formula we can write~\cite{Truax1988}
\begin{align}
    e^{\ii z_i\phia}e^{\ii x_i\pia}
    =
    e^{x_i z_i C}e^{\ii\OO i},
\end{align}
where $\OO i:= x_i\pia+z_i\phia$, and $C$ is defined by
\begin{align}
    C:=-\frac{1}{2}\langle[\phia,\pia]\rangle =
    \label{eq:C}
    -\frac{\ii\lambda_\phi\lambda_\pi}{2}
    \int\d[d]{\bm k}|\tilde F_\textsc{a}(\bm k)|^2,
\end{align}
where we note that the commutator in \eqref{eq:C} is proportional to the identity and thus its expectation value is state-independent. Then, $\hat\rho_\textsc{cb}$ can be written as
\begin{align}
    \hat\rho_\textsc{cb}
    &=
    \frac{1}{2}
    \sum_{j,k,x_i,z_i}
    e^{x_1 z_1 C}
    e^{-x_2 z_2 C}
    e^{x_3 z_3 C}
    e^{-x_4 z_4 C}
    \notag\\
    &\hspace{1cm}\times
    \bra{0}
    e^{\ii\OO 1}
    e^{\ii\OO 2}
    e^{\ii\OO 3}
    e^{\ii\OO 4}
    \ket{0}
    \notag\\
    &\hspace{1cm}\times
    \bra{k_z}\P{-z_1}\P{-x_1}\P{x_4}\P{z_4}\ket{j_z}_\textsc{a}
    \ket{-j_z}_\textsc{c}\bra{-k_z}\notag\\
    &\hspace{1cm}\otimes
    \P{-z_3}\P{-x_3}\ket{+_y}_\textsc{b}\bra{+_y}
    \P{x_2}\P{z_2}.
\end{align}

To further simplify this expression for $\hat\rho_\textsc{cb}$ let us make use of the identity
\begin{equation}
    \bra{0}
    \prod_{l=1}^n
    e^{\ii\OO j}
    \ket{0}
    =
    \prod_{l<m}^n e^{-W_{lm}}
    \prod_{l=1}^n e^{-\frac{1}{2}W_{ll}},
\end{equation}
where $W_{lm}:=\bra{0}\OO l\OO m\ket{0}$. This identity holds for any operators $\OO j$ which are linear in the field creation and annihilation operators, and it can straightforwardly be proven using Wick's theorem~\cite{Peskin1996}.\footnote{In fact, an analogous version of this identity holds not only for the field vacuum state $\ket 0$, but also for any Gaussian state~\cite{Adesso2014}.} Then $\hat\rho_\textsc{cb}$ becomes
\begin{align}
\label{eq:rho_cb_2}
    \hat\rho_\textsc{cb}
    &=
    \frac{1}{2}
    \sum_{j,k,x_i,z_i}
    e^{x_1 z_1 C_1}
    e^{-x_2 z_2 C_2}
    e^{x_3 z_3 C_3}
    e^{-x_4 z_4 C_4}\notag\\
    &\hspace{1cm}\times
    \prod_{l<m}^4 e^{-W_{lm}}
    \prod_{l=1}^4 e^{-\frac{1}{2}W_{ll}}
    \notag\\
    &\hspace{1cm}\times
    \bra{k_z}\P{-z_1}\P{-x_1}\P{x_4}\P{z_4}\ket{j_z}_\textsc{a}
    \ket{-j_z}_\textsc{c}\bra{-k_z}\notag\\
    &\hspace{1cm}\otimes
    \P{-z_3}\P{-x_3}\ket{+_y}_\textsc{b}\bra{+_y}
    \P{x_2}\P{z_2},
\end{align}
and $W_{lm}$ evaluates to
\begin{align}
    W_{lm}
    &=
    \int\frac{\d[d]{\bm k}}{2|\bm k|}
    |\tilde F_\textsc{a}(\bm k)|^2
    \left(
    z_l\lambda_\phi-\ii|\bm k|x_l\lambda_\pi
    \right)
    \notag\\
    &\hspace{1cm}\times\left(
    z_m\lambda_\phi-\ii|\bm k|x_m\lambda_\pi
    \right).
    \label{eq:W_lm}
\end{align}
Hence we now see that if we specify the coupling constants $\lambda_\phi$ and $\lambda_\pi$, as well as the smearing function $\hat F_\textsc{a}(\bm x)$, we can straightforwardly compute $C$ and $W_{lm}$ (at least numerically), and hence obtain a (numerical) result for the density matrix $\hat\rho_\textsc{cb}$.

\subsection{Gaussian detector smearing}
\label{sec:gaussian}

In order to numerically compute the coherent information of our quantum channel, let us now particularize our discussion to $(3+1)$-dimensions, and let us set the smearing function of Alice's detector to be a Gaussian of width $\sigma$, i.e.
\begin{equation}
    F_\textsc{a}(\bm x)=\frac{1}{(\sqrt{\pi}\sigma)^3}\exp\left(-\frac{|\bm x|^2}{\sigma^2}\right),
\end{equation}
which has a Fourier transform that is given by
\begin{equation}
    \tilde F_\textsc{a}(\bm k)=\frac{1}{\sqrt{(2\pi)^3}}\exp\left(-\frac{1}{4}|\bm k|^2\sigma^2\right).
\end{equation}
Then, the conditions \eqref{eq:strong_coupling_condition} and \eqref{eq:gamma_condition} on the coupling strengths $\lambda_\phi$ and $\lambda_\pi$, which we require in order to have a perfect quantum channel, simplify to
\begin{align}
\label{eq:strong_coupling_condition_2}
    &\lambda_\phi\gg\sigma, \text{ and}
    \\
    \label{eq:gamma_condition_2}
    &\frac{\lambda_\phi\lambda_\pi}{\sqrt{(2\pi)^3}\sigma^3}
    =
    \frac{\pi}{4}\mod 2\pi.
\end{align}
In particular, recalling that the strong-coupling condition \eqref{eq:strong_coupling_condition} is a requirement in order for Alice's qubit to become maximally entangled with the field, we find that this is only possible if the coupling strength $\lambda_\phi$ of the detector is much larger than its size. Finally, using Eqs.~\eqref{eq:C} and \eqref{eq:W_lm}, we can also readily obtain simple analytical expressions for $C$ and $W_{lm}$, namely
\begin{align}
    C 
    &= 
    \frac{-\ii\lambda_\phi\lambda_\pi}{2\sqrt{(2\pi)^3}\sigma^3},\\
    W_{lm} 
    &=
    \frac{1}{8\pi^2\sigma^4}
    \big[
    4 x_l x_m \lambda_\pi^2+2 z_l z_m \sigma^2\lambda_\phi^2
    \notag\\
    &\hspace{1.5cm}
    +\ii\sqrt{2\pi}\sigma\lambda_\phi\lambda_\pi\left(x_m z_l-x_l z_m\right)\big].
\end{align}
We now have all of the necessary components to compute $\hat\rho_\textsc{cb}$ via Eq.~\eqref{eq:rho_cb_2}, and hence to compute the coherent information $I_c(\hat\rho_{\textsc{a},0},\Xi)$ of the channel $\Xi$ and the input state $\hat\rho_{\textsc{a},0}$. 

\begin{figure}
    \centering
    \includegraphics[width=\linewidth]{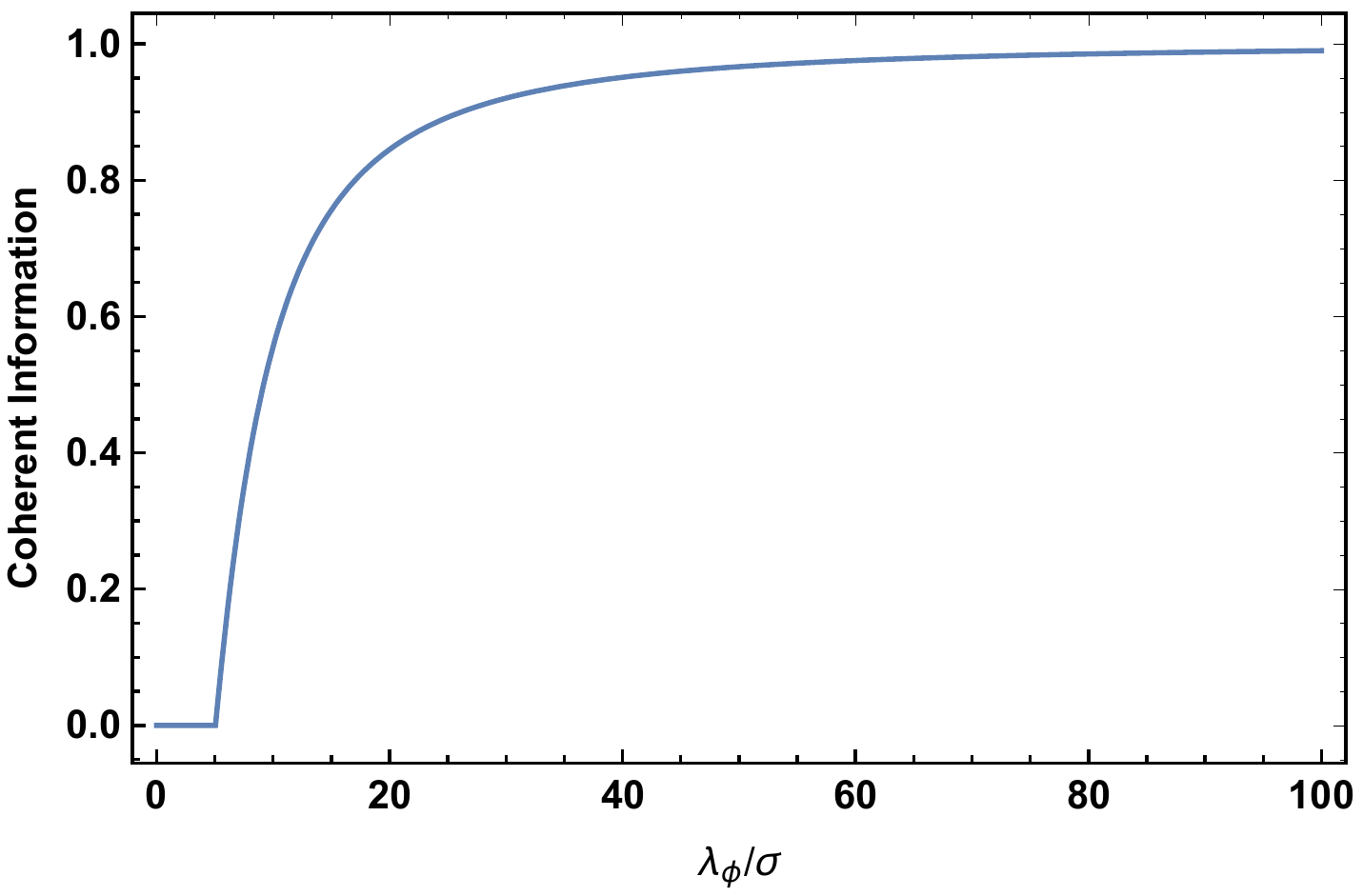}
    \caption{Plot of $\max\{0,I_c(\hat\rho_{\textsc{a},0},\Xi)\}$ versus the ratio of the coupling strength $\lambda_\phi$ to the size of Alice's detector, $\sigma$. Here $\hat\rho_{\textsc{a},0}=\frac{1}{2}\mathds 1$ is the maximally mixed input state to the channel $\Xi$. Notice that for $\lambda_\phi\gg\sigma$, the coherent information approaches its maximum possible value of 1, thus confirming that in this limit $\Xi$ is a perfect quantum channel.}
    \label{fig:C_vs_lambda}
\end{figure}

In Fig.~\ref{fig:C_vs_lambda} we plot $I_c(\hat\rho_{\textsc{a},0},\Xi)$ versus $\lambda_\phi/\sigma$, and we find, as expected that for $\lambda_\phi/\sigma\rightarrow\infty$, the coherent information $I_c(\hat\rho_{\textsc{a},0},\Xi)$ approaches its maximum value of 1. Additionally, since we know that the quantum channel capacity $Q(\Xi)$ is lower bounded by $I_c(\hat\rho_{\textsc{a},0},\Xi)$, and since we also know that $Q(\Xi)\le 1$ (i.e. a single use of the channel can transmit at most one qubit), we thus conclude that in the limit $\lambda_\phi/\sigma\rightarrow\infty$, the quantum channel capacity $Q(\Xi)$ approaches its maximum value of 1. In other words, we have verified, without the use of any approximations, that the field-mediated quantum channel from Alice to Bob is indeed a perfect quantum channel if the conditions \eqref{eq:strong_coupling_condition} and \eqref{eq:gamma_condition} are satisfied.

\section{Where does the quantum information propagate?}
\label{sec:where_does_q_info_propagate}

While we have mathematically verified that the quantum channel $\Xi$ from Alice to Bob is a perfect quantum channel, we still have some work to do in order to understand the physics of quantum information propagation through a relativistic quantum field. In particular, we have yet to discuss where in space Bob needs to be located at time $\tbb$ in order to receive Alice's message, which she encoded in the field at an earlier time $\taa$. Let us now attempt to better understand this issue.

Recall from Theorem~\ref{theorem:AQFT} that if Alice couples to the field at time $\taa$ with a spatial smearing $F_\textsc{a}(\bm x)$, then in order for Bob to perfectly recover Alice's message at a time $\tbb$ he needs to be able to couple his detector to the field $\phih$ and the conjugate field $\pih$ with three different smearing functions $F_{\textsc{b}i}(\bm x)$, which are related to $F_\textsc{a}(\bm x)$ via their Fourier transforms,
\begin{align}
\label{eq:FB1_b}
    \tilde F_{\textsc{b}1}(\bm k)
    &=
    \tilde F_\textsc{a}(\bm k)\sinc(\Delta\om k)(-\Delta),\\
    \label{eq:FB2_b}
    \tilde F_{\textsc{b}2}(\bm k)
    &=
    \tilde F_\textsc{a}(\bm k)\cos(\Delta\om k),\\
    \label{eq:FB3_b}
    \tilde F_{\textsc{b}3}(\bm k)
    &=
    \tilde F_\textsc{a}(\bm k)\sin(\Delta\om k)\om k,
\end{align}
where $\Delta:=\tbb-\taa$. Also recall that this result is valid in a flat spacetime of any dimension, and for any field mass. However, because the inverse Fourier transform is different in different spacetime dimensions, we expect the coordinate space functions $F_{\textsc{b}i}(\bm x)$ to have significantly different forms in different spacetimes. To see that this is indeed the case, let us now consider the $(3+1)$ and $(2+1)$ dimensional cases, both with a massless field. We will find that quantum information propagates very differently through the relativistic field in these two spacetimes, and that the violations of strong Huygens principle \cite{McLenaghan1974} play a fundamental role in the localization of quantum information encoded in quantum fields.

\subsection{(3+1)-dimensions}

In $(3+1)$-dimensions we are fortunate enough that we can obtain very simple and intuitive expressions for Bob's smearing functions $F_{\textsc{b}i}(\bm x)$, which are related to Alice's smearing function $F_\textsc{a}(\bm x)$ via their Fourier transforms via Eqs.~\eqref{eq:FB1_b}-\eqref{eq:FB3_b}. 

To obtain these expressions for $F_{\textsc{b}i}(\bm x)$, let us first recall the $d$-dimensional convolution theorem~\cite{Arfken1999}, which states that for two functions $f,g\in L^1(\mathbb R^d)$, 
\begin{align}
    \mathcal F^{-1}
    \left[\mathcal F[f]\mathcal F[g]
    \right](\bm x)
    &=
    \frac{1}{\sqrt{(2\pi)^d}}
    \left(
    f*g
    \right)(\bm x)\\
    &:=
    \frac{1}{\sqrt{(2\pi)^d}}
    \int\d[d]{\bm x'}f(\bm x')g(\bm x-\bm x'),\notag
\end{align}
where $\mathcal F$ denotes the $d$-dimensional Fourier transform defined by Eq.~\eqref{eq:FT}, $\mathcal F^{-1}$ denotes the inverse Fourier transform, and $(f*g)(\bm x)$ is the convolution product between $f(\bm x)$ and $g(\bm x)$.

Applying the convolution theorem to Eqs.~\eqref{eq:FB1_b}-\eqref{eq:FB3_b}, we obtain
\begin{align}
\label{eq:FB1_c}
    F_{\textsc{b}1}(\bm x)
    &=
    \frac{1}{\sqrt{(2\pi)^d}}
    \int\d[d]{\bm x'}
    F_\textsc{a}(\bm x')\notag\\
    &\hspace{1.5cm}\times
    \mathcal F^{-1}[-\Delta\sinc(\Delta |\bm k|)](\bm x-\bm x'),\\
    \label{eq:FB2_c}
    F_{\textsc{b}2}(\bm x)
    &=
    \frac{1}{\sqrt{(2\pi)^d}}
    \int\d[d]{\bm x'}
    F_\textsc{a}(\bm x')\notag\\
    &\hspace{1.5cm}\times
    \mathcal F^{-1}[\cos(\Delta |\bm k|)](\bm x-\bm x'),\\
    \label{eq:FB3_c}
    F_{\textsc{b}3}(\bm x)
    &=
    \frac{1}{\sqrt{(2\pi)^d}}
    \int\d[d]{\bm x'}
    F_\textsc{a}(\bm x')\notag\\
    &\hspace{1.5cm}\times
    \mathcal F^{-1}[|\bm k|\sin(\Delta |\bm k|)](\bm x-\bm x'),
\end{align}
where we note that $\om k=|\bm k|$ since we are setting the field mass $m$ equal to zero. Note that the above expressions are valid for any spatial dimension $d$ of the flat spacetime. In order to proceed to calculate $F_{\textsc{b}i}(\bm x)$ via Eqs.~\eqref{eq:FB1_c}-\eqref{eq:FB3_c}, we must compute the inverse Fourier transforms of the functions $-\Delta\sinc(\Delta |\bm k|),\cos(\Delta|\bm k|)$, and $|\bm k|\sin(\Delta |\bm k|)$. 

Let us now particularize to $d=3$, in which case obtaining explicit (distributional) expressions for these inverse Fourier transforms is possible. Namely we find
\begin{align}
\label{eq:FT_sinc}
    \mathcal F_{d=3}^{-1}[-\Delta\sinc(\Delta |\bm k|)](\bm x)
    &=
    -\sqrt{(2\pi)^3}\frac{\delta(r-\Delta)}{4\pi r},\\
    \label{eq:FT_cos}
    \mathcal F_{d=3}^{-1}[\cos(\Delta |\bm k|)](\bm x)
    &=
    -\sqrt{(2\pi)^3}\frac{\delta'(r-\Delta)}{4\pi r},\\
    \label{eq:FT_sin}
    \mathcal F_{d=3}^{-1}[|\bm k|\sin(\Delta |\bm k|)](\bm x)
    &=
    -\sqrt{(2\pi)^3}\frac{\delta''(r-\Delta)}{4\pi r},
\end{align}
with $r:=|\bm x|$ and where we are explicitly indicating that these are 3-dimensional inverse Fourier transforms. Here, $\delta'(\bm x)$ and $\delta''(\bm x)$ denote the first and second derivatives of the delta function. It is easiest to verify these results by taking the Fourier transform of the right-hand sides and checking that we get the expected answer. For example, let us verify Eq.~\eqref{eq:FT_cos} in this way. We find
\begin{align}
    &\mathcal F_{d=3}
    \left[
    -\sqrt{(2\pi)^3}\frac{\delta'(r-\Delta)}{4\pi r}
    \right](\bm k)\notag\\
    =\,&
    -\int\d[3]{\bm x}
    \frac{\delta'(r-\Delta)}{4\pi r}
    e^{-\ii\bm k\cdot\bm x}\notag\\
    =\,&
    -4\pi\int_0^\infty
    \dif r\, r^2 
    \frac{\delta'(r-\Delta)}{4\pi r}
    \frac{\sin(r|\bm k|)}{r|\bm k|},\notag\\
    =\,&
    -\frac{1}{|\bm k|}\int_0^\infty
    \dif r\,
    \delta'(r-\Delta)
    \sin(r|\bm k|)\\
    =\,&
    \cos(\Delta|\bm k|),
\end{align}
which proves Eq.~\eqref{eq:FT_cos}. Eqs.~\eqref{eq:FT_sinc} and \eqref{eq:FT_sin} can be proven analogously.

Substituting Eqs.~\eqref{eq:FT_sinc}-\eqref{eq:FT_sin} into Eqs.~\eqref{eq:FB1_c}-\eqref{eq:FB3_c}, we find that in $(3+1)$-dimensions Bob's smearing functions $F_{\textsc{b}i}(\bm x)$ are given in terms of Alice's smearing $F_\textsc{a}(\bm x)$ as
\begin{align}
\label{eq:FB1_d}
    F_{\textsc{b}1}(\bm x)
    &=
    -
    \int\d[d]{\bm x'}
    F_\textsc{a}(\bm x')
    \frac{\delta(|\bm x-\bm x'|-\Delta)}{4\pi |\bm x-\bm x'|},\\
    \label{eq:FB2_d}
    F_{\textsc{b}2}(\bm x)
    &=
    -
    \int\d[d]{\bm x'}
    F_\textsc{a}(\bm x')
    \frac{\delta'(|\bm x-\bm x'|-\Delta)}{4\pi |\bm x-\bm x'|},\\
    \label{eq:FB3_d}
    F_{\textsc{b}3}(\bm x)
    &=
    -
    \int\d[d]{\bm x'}
    F_\textsc{a}(\bm x')
    \frac{\delta''(|\bm x-\bm x'|-\Delta)}{4\pi |\bm x-\bm x'|}.
\end{align}
Hence, since the $\delta$, $\delta'$ and $\delta''$ functions above only have support if $|\bm x-\bm x'|=\Delta$, we find that in $(3+1)$-dimensions Bob's smearing functions $F_{\textsc{b}i}(\bm x)$ on the time-slice \mbox{$t=\tbb=\taa+\Delta$} only have support if they are in lightlike separation from Alice's smearing $F_\textsc{a}(\bm x)$ on the time-slice $t=\taa$. Therefore, in order for Bob to fully receive Alice's quantum message ($Q(\Xi)\to 1)$ through our field-mediated quantum channel in $(3+1)$-dimensions, he needs to be able to couple his detector on the entirety of Alice's lightcone. In other words, quantum information in $(3+1)$-dimensions propagates through a massless field precisely at the speed of light. While this result may be intuitive, we will see that it is actually a peculiarity of odd-spatial dimensions flat spacetime (e.g., 3+1 dimensional Minkowski space) and will not be true in general.

To conclude this section, let us illustrate the above result by considering a particular smearing $F_\textsc{a}(\bm x)$ for Alice, namely the Gaussian function considered in Sec.~\ref{sec:gaussian},
\begin{equation}
\label{eq:FA_gaussian}
    F_\textsc{a}(\bm x)=\frac{1}{(\sqrt{\pi}\sigma)^3}\exp\left(-\frac{|\bm x|^2}{\sigma^2}\right).
\end{equation}
We plot $F_\textsc{a}(\bm x)$ and the resulting smearing functions $F_{\textsc{b}i}(\bm x)$ for Bob's detector in Fig.~\ref{fig:Smearings_3D}. Notice that, as expected, at time $\tbb=\taa+\Delta$ Bob needs to couple to the field only near $|\bm x|=\Delta$ (i.e. on Alice's lightcone) in order to be able to fully recover her quantum message.

\begin{figure}
    \centering
    \includegraphics[width=\linewidth]{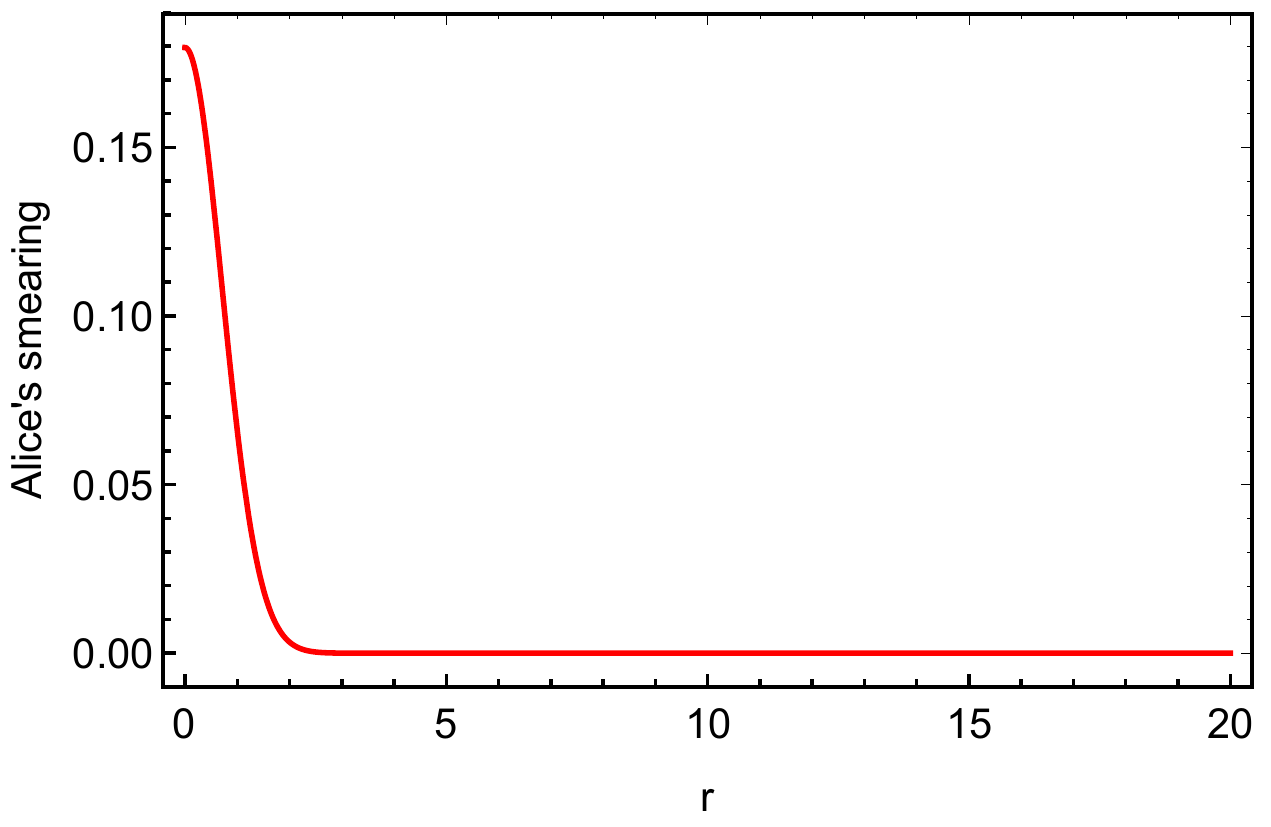}
    \includegraphics[width=\linewidth]{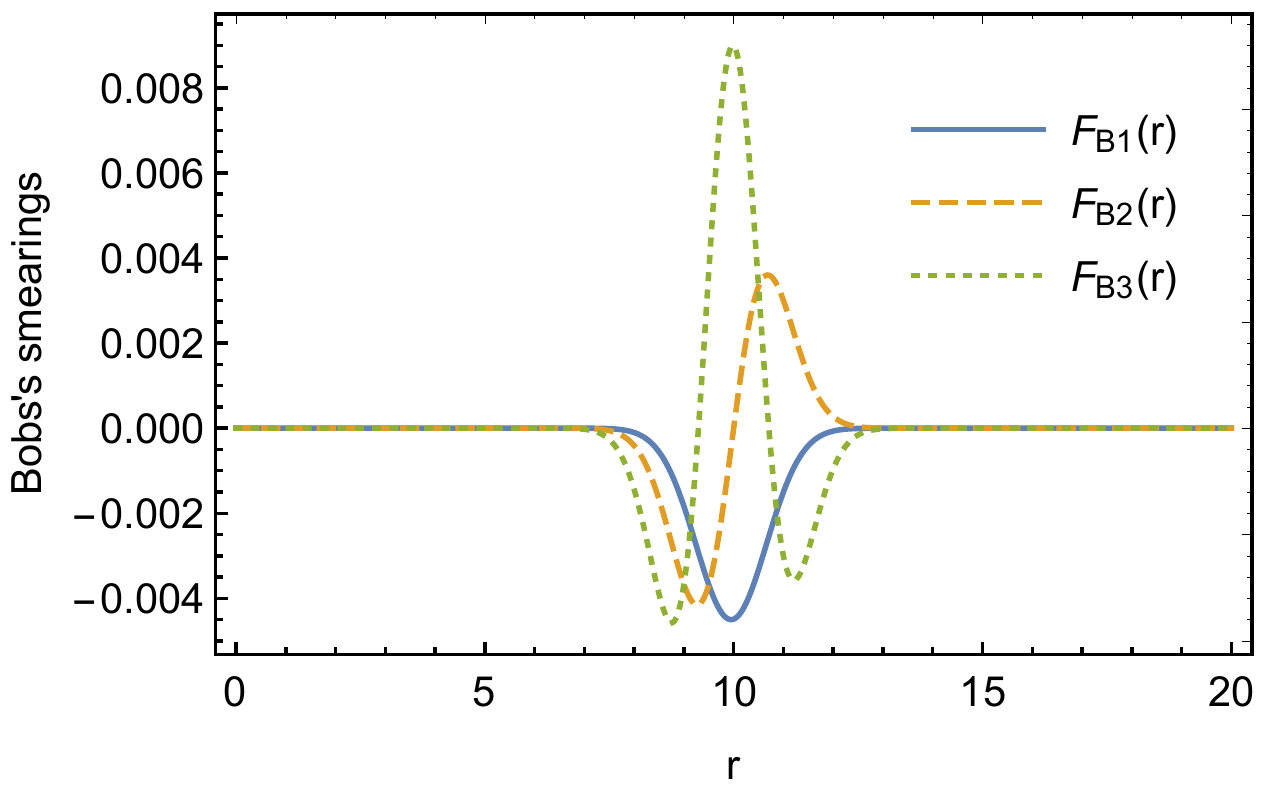}
    \caption[Propagation of quantum information through a massless field in $(3+1)$-dimensional flat spacetime]{$(3+1)$-dimensions. \textbf{Top}: Location at which Alice couples to the massless field at time $\taa$, given by Eq.~\eqref{eq:FA_gaussian}, with $\sigma=1$ and where $r=|\bm x|$ is measured in units of $\sigma$. \textbf{Bottom}: Bob's smearing functions, which dictate where in space Bob needs to couple to the field at time $\tbb=\taa+\Delta$ in order to receive Alice's message (we set $\Delta =10$).}
    \label{fig:Smearings_3D}
\end{figure}

The main result of this section --- i.e. that Bob needs to be lightlike separated from Alice in $(3+1)$-dimensions in order to receive her quantum message --- is fundamentally related to the strong Huygens principle, which we recall from our discussion in the introduction holds in $(3+1)$-dimensional flat spacetime. Namely, recall that the strong Huygens principle states that the massless field's radiation Green's function (and hence the expectation of the field commutator in the quantum case) only has support between lightlike separated events~\cite{McLenaghan1974}, and hence communication between observers via this quantum field is only possible if they are in null separation. While the implications of this fact have previously been studied in great detail for \textit{classical} communication protocols~\cite{Jonsson2015,Blasco2015,Jonsson2016,Blasco2016,Simidzija2017}, the work presented here is the first time that the effects of the strong Huygens principle have been studied in the context of \textit{quantum} communication.

\subsection{(2+1)-dimensions}

Let us now attempt to repeat the analysis of the previous section, but this time in $(2+1)$-dimensional Minkowski space. We expect to find significant differences to the $(3+1)$-dimensional case, due to the violations of the strong Huygens principle that occur in the former but not the latter spacetime.

Recall that the key expressions directly relating Bob's smearing functions $F_{\textsc{b}i}(\bm x)$ to Alice's smearing function $F_\textsc{a}(\bm x)$ are Eqs.~\eqref{eq:FB1_c}-\eqref{eq:FB3_c}. As in the $(3+1)$-dimensional case, in order to gain insight into the propagation of quantum information from these equations, we must first compute the Fourier transforms of the functions $-\Delta\sinc(\Delta |\bm k|)$, $\cos(\Delta|\bm k|)$, and $|\bm k|\sin(\Delta |\bm k|)$. Unfortunately however, we are only aware of a closed form expression for the first of these Fourier transforms, which reads
\begin{equation}
    \label{eq:FT_sinc_2D}
    \mathcal F_{d=2}^{-1}[-\Delta\sinc(\Delta |\bm k|)](\bm x)
    =
    \begin{cases}
    \frac{1}{\Delta\sqrt{\Delta^2-r^2}}
    & r<\Delta,\\
    0 & r\ge \Delta,
    \end{cases}
\end{equation}
and where once again $r:=|\bm x|$. Nevertheless, from this equation alone we can see an interesting feature of the propagation of quantum information in $(2+1)$-dimensions. Namely, unlike the 3D Fourier transforms given by Eqs.~\eqref{eq:FT_sinc}-\eqref{eq:FT_sin}, which only had support for $r=\Delta$, the 2D Fourier transform of $\sinc(\Delta|\bm k|)$ has support inside the light cone, i.e. for $r<\Delta$. Hence, after inserting this Fourier transfrom into Eq.~\eqref{eq:FB1_c}, we find that in $(2+1)$-dimensions the first of Bob's smearing functions, $F_{\textsc{b}1}(\bm x)$, is given by
\begin{equation}
\label{eq:FB1_2D}
    F_{\textsc{b}1}(\bm x)
    =
    \frac{1}{2\pi}
    \int_{B_\Delta(\bm x)}\d[2]{\bm x'}
    \frac{F_\textsc{a}(\bm x')}{\Delta\sqrt{\Delta^2-|\bm x-\bm x'|^2}},
\end{equation}
where $B_\Delta(\bm x)$ is the ball of radius $\Delta$ centered at $\bm x$. Thus we see that $F_{\textsc{b}1}(\bm x)$ has support even if $|\bm x-\bm x'|<\Delta$, and hence we conclude that if Bob wants to receive all possible quantum information from Alice in $(2+1)$-dimensions, then he needs to have access not only to Alice's lightcone, but also to the interior of the lightcone. In other words, quantum information in $(2+1)$-dimensions propagates slower than light via a massless field. This is in agreement with the violations of the strong Huygens principle that occur in $(2+1)$-dimensional Minkowski spacetime.

While we have come to this conclusion just by focusing on the smearing function $F_{\textsc{b}1}(\bm x)$ --- since it is the only one out of the $F_{\textsc{b}i}(\bm x)$ for which we could obtain an integral expression of the form \eqref{eq:FB1_2D} with a closed-form integrand --- let us, for the sake of completeness, now verify numerically that the smearing functions $F_{\textsc{b}2}(\bm x)$ and $F_{\textsc{b}3}(\bm x)$ also have support inside of the light cone, i.e. for $r<\Delta$. Analogous to the (3+1)-dimensional case, let us suppose that Alice's smearing function $F_\textsc{a}(\bm x)$ is given by the Gaussian
\begin{equation}
\label{eq:FA_gaussian_2}
    F_\textsc{a}(\bm x)=\frac{1}{(\sqrt{\pi}\sigma)^2}\exp\left(-\frac{|\bm x|^2}{\sigma^2}\right).
\end{equation}
Then, in Fig.~\ref{fig:Smearings_2D} we indeed find that all three that Bob's smearing functions $F_{\textsc{b}i}(\bm x)$ have support for $r<\Delta$, and hence, as already stated above, we conclude that in order to recover Alice's quantum message in $(2+1)$-dimensional flat spacetime, Bob must couple to the massless field inside of Alice's future light cone. 

\begin{figure}
    \centering
    \includegraphics[width=\linewidth]{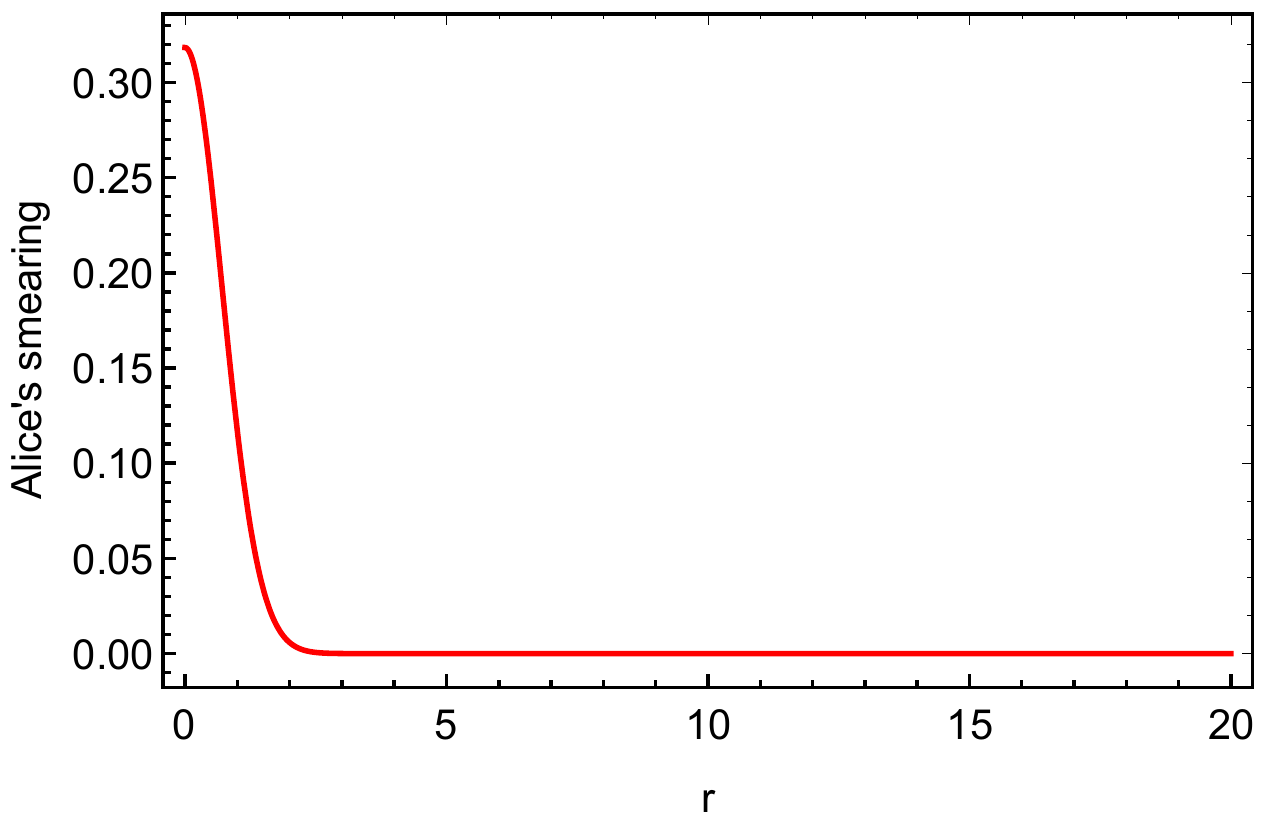}
    \includegraphics[width=\linewidth]{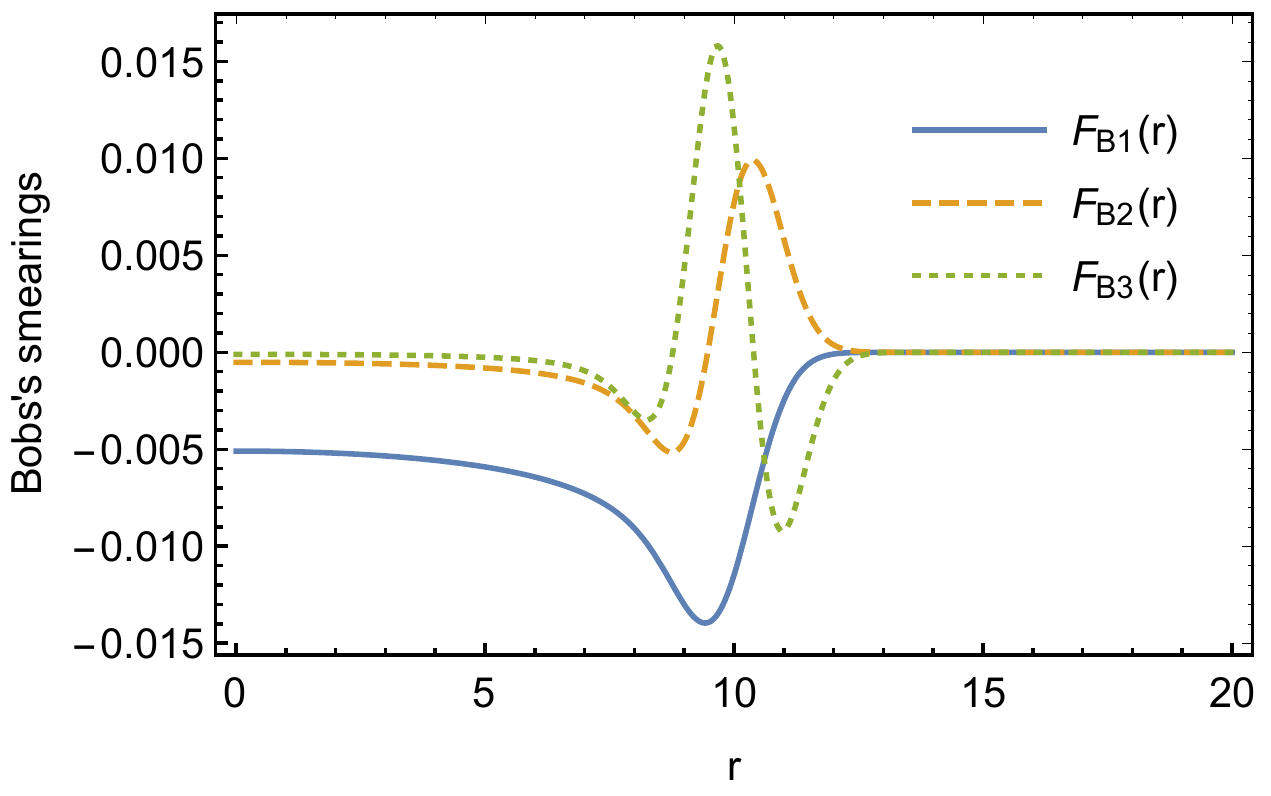}
    \caption[Propagation of quantum information through a massless field in $(2+1)$-dimensional flat spacetime]{$(2+1)$-dimensions. \textbf{Top}: Location at which Alice couples to the massless field at time $\taa$, given by Eq.~\eqref{eq:FA_gaussian_2}, with $\sigma=1$ and where $r=|\bm x|$ is measured in units of $\sigma$. \textbf{Bottom}: Bob's smearing functions, which dictate where in space Bob needs to couple to the field at time $\tbb=\taa+\Delta$ in order to receive Alice's message (we set $\Delta =10$). Note that all three of Bob's smearing functions have support inside of the light cone, i.e. they are only polynomially, rather than exponentially, suppressed for $r\ll\Delta$.}
    \label{fig:Smearings_2D}
\end{figure}

\section{Broadcasting quantum information}
\label{sec:broadcasting_q_info}

In the previous section we have obtained a better understanding of how quantum information propagates through a quantum field by answering the question: Where in space does Bob need to be located if he wants to receive the quantum message that Alice broadcast through the quantum field? Indeed, we found that the answer depends on the spacetime in which Alice and Bob are located. For instance, in $(3+1)$-dimensional Minkowski spacetime, Bob needs to be smeared across Alice's entire light cone, while in the $(2+1)$-dimensional case he also needs to cover the interior of the light cone. In particular, note that in both spacetimes Alice's message is broadcast isotropically in all spatial directions, which is, of course, simply a consequence of the fact that Alice's coupling to the field was fully isotropic.

Let us now consider the relevant case of $(3+1)$-dimensional Minkowski spacetime. Then, although we are fortunate that in this case Bob does not need to cover the interior of Alice's light cone in order to receive her full signal, from a practical perspective it is still very restrictive to require that Bob covers the surface of the lightcone itself (i.e. without the interior), as we found is required in order for him to receive the entirety of Alice's quantum message.

A natural question then arises: Is Alice able to transmit quantum information to Bob if he only covers a part of her light cone? This question is relevant, for instance, if Alice wants to broadcast her information to multiple disjoint receivers, each located in a different spatial direction relative to Alice. In fact, this question was partially answered in Ref.~\cite{Jonsson2018}, where the authors showed that, in a flat spacetime of any dimension, it is not possible for Alice to send any amount of quantum information to multiple \textit{identical} Bobs.\footnote{Of course, from the no-cloning theorem~\cite{Wootters1982} it is clear that Alice cannot perfectly send quantum information to multiple identical Bobs, since this would amount to her quantum state being cloned. The importance of the result in \cite{Jonsson2018} is that it showed this to be true for \textit{any} amount of quantum information, no matter how small.} In this section we will attempt to circumvent this result by considering \textit{non-identical} Bobs.

More concretely, let us consider the setup shown in Fig.~\ref{fig:two_bobs_diagram}, in which two Bobs are trying to recover the message which Alice broadcast into the field. Both Bobs are spherically symmetric, with Bob $B_1$ covering the region of space given by $r<r_0$, and Bob $B_2$ covering the region $r>r_0$. We consider this setup both for its computational simplicity (owing to the fact that spherical symmetry is preserved), as well as the fact that the Bobs in this setup are not identical, thus allowing us to potentially overcome the limitations imposed upon identical Bobs~\cite{Jonsson2018}, as discussed above. Despite the simplicity of the setup however, it will nevertheless provide us with interesting insights into the broadcasting of quantum information through a relativistic quantum field. 

\begin{figure}
    \centering
    \includegraphics[width=\linewidth]{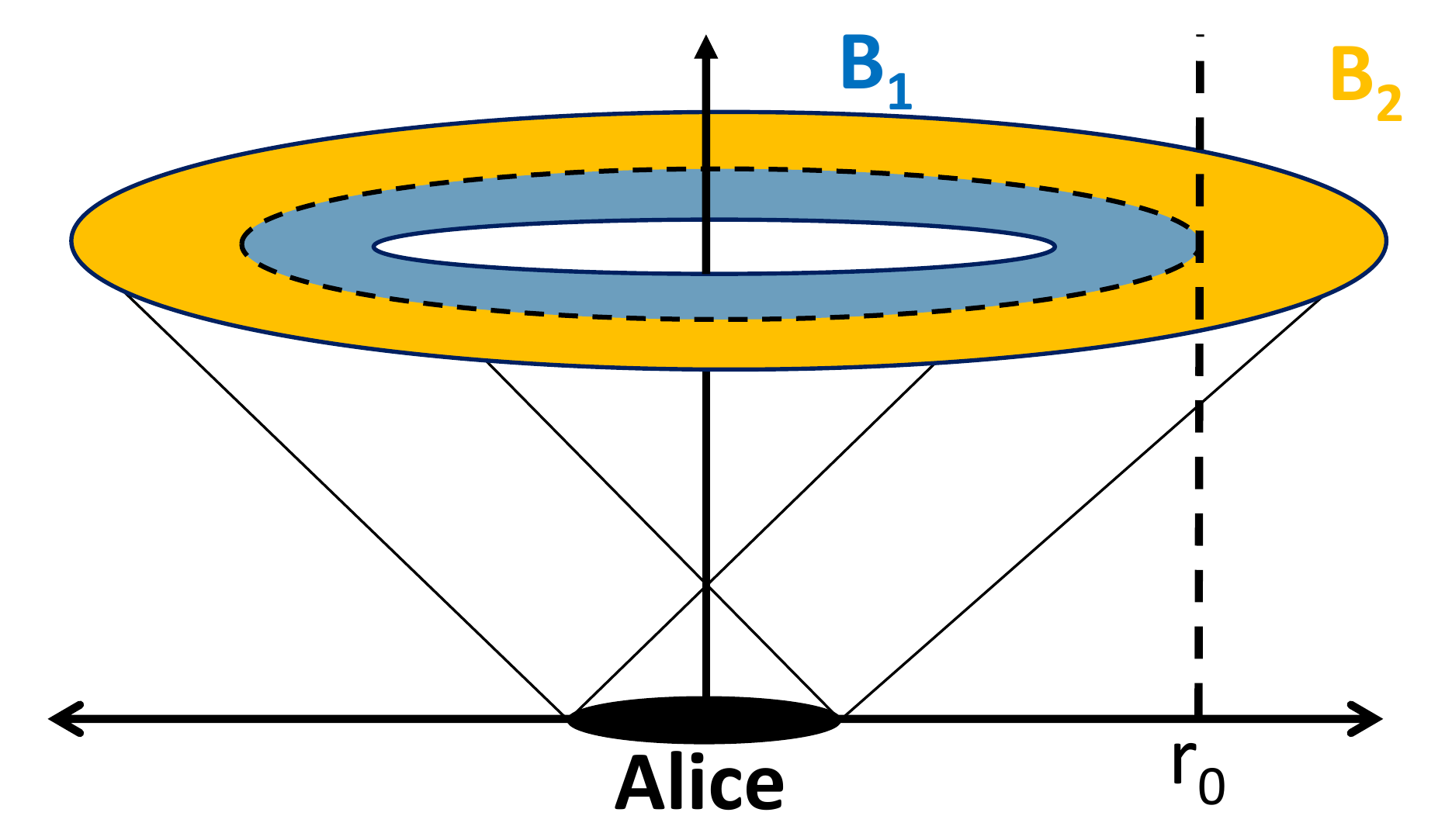}
    \caption[Spacetime diagram of quantum information broadcasting setup from Alice to two Bobs]{$(3+1)$-dimensional quantum information broadcasting setup considered in this section: Alice attempts to send quantum information to two spherically symmetric Bobs, $B_1$ and $B_2$, separated by the radius $r=r_0$.}
    \label{fig:two_bobs_diagram}
\end{figure}

To proceed, let us start by setting the initial state for both Bobs $B_1$ and $B_2$ to be $\ket{+_y}$, and Alice's initial state to be the maximally mixed state, $\hat\rho_{\textsc{a},0}=\frac{1}{2}\mathds 1$. We set the smearing $F_\textsc{a}(\bm x)$ of Alice's detector to be a Gaussian of width $\sigma$, given by Eq.~\eqref{eq:FA_gaussian}. Then, as we saw in Sec.~\ref{sec:testing_quantum_channel}, if Bob wants to recover the entirety of Alice's message, he needs to be able to couple to the field (and its conjugate momentum) via three different smearing functions, $F_{\textsc{b}i}(\bm x)$, given by Eqs.~\eqref{eq:FB1_d}-\eqref{eq:FB3_d}. Namely, 
\begin{align}
\label{eq:FB1_e}
    F_{\textsc{b}1}(\bm x)
    &=
    -
    \int\d[d]{\bm x'}
    F_\textsc{a}(\bm x-\bm x')
    \frac{\delta(|\bm x'|-\Delta)}{4\pi |\bm x'|},\\
    \label{eq:FB2_e}
    F_{\textsc{b}2}(\bm x)
    &=
    -
    \int\d[d]{\bm x'}
    F_\textsc{a}(\bm x-\bm x')
    \frac{\delta'(|\bm x'|-\Delta)}{4\pi |\bm x'|},\\
    \label{eq:FB3_e}
    F_{\textsc{b}3}(\bm x)
    &=
    -
    \int\d[d]{\bm x'}
    F_\textsc{a}(\bm x-\bm x')
    \frac{\delta''(|\bm x'|-\Delta)}{4\pi |\bm x'|}.
\end{align}
This is the ideal case however, where Bob has access to the entirety of Alice's lightcone. We now want to consider the less-than-ideal case of two Bobs, $B_1$ and $B_2$, that only have access to spatial regions $r<r_0$ and $r>r_0$, respectively. Hence, let us set the smearing functions for Bob $B_1$ to be
\begin{align}
\label{eq:FB1_Bob1}
    F_{\textsc{b}1}^{(1)}(\bm x)
    &=
    -
    \int\d[d]{\bm x'}
    F_\textsc{a}(\bm x-\bm x')
    \frac{\delta(|\bm x'|-\Delta)}{4\pi |\bm x'|}
    \Theta(r_0-|\bm x'|),\\
    \label{eq:FB2_Bob1}
    F_{\textsc{b}2}^{(1)}(\bm x)
    &=
    -
    \int\d[d]{\bm x'}
    F_\textsc{a}(\bm x-\bm x')
    \frac{\delta'(|\bm x'|-\Delta)}{4\pi |\bm x'|}
    \Theta(r_0-|\bm x'|),\\
    \label{eq:FB3_Bob1}
    F_{\textsc{b}3}^{(1)}(\bm x)
    &=
    -
    \int\d[d]{\bm x'}
    F_\textsc{a}(\bm x-\bm x')
    \frac{\delta''(|\bm x'|-\Delta)}{4\pi |\bm x'|}
    \Theta(r_0-|\bm x'|),
\end{align}
where the subscript $(1)$ indicates Bob $B_1$, and the $\Theta$ functions ensure that these smearings are only non-zero in the ball $r<r_0$ centered on Alice. Similarly, for Bob $B_2$ we set the smearings to be
\begin{align}
\label{eq:FB1_Bob2}
    F_{\textsc{b}1}^{(2)}(\bm x)
    &=
    -
    \int\d[d]{\bm x'}
    F_\textsc{a}(\bm x-\bm x')
    \frac{\delta(|\bm x'|-\Delta)}{4\pi |\bm x'|}
    \Theta(|\bm x'|-r_0),\\
    \label{eq:FB2_Bob2}
    F_{\textsc{b}2}^{(2)}(\bm x)
    &=
    -
    \int\d[d]{\bm x'}
    F_\textsc{a}(\bm x-\bm x')
    \frac{\delta'(|\bm x'|-\Delta)}{4\pi |\bm x'|}
    \Theta(|\bm x'|-r_0),\\
    \label{eq:FB3_Bob2}
    F_{\textsc{b}3}^{(2)}(\bm x)
    &=
    -
    \int\d[d]{\bm x'}
    F_\textsc{a}(\bm x-\bm x')
    \frac{\delta''(|\bm x'|-\Delta)}{4\pi |\bm x'|}
    \Theta(|\bm x'|-r_0),
\end{align}
which only have support in the spatial region $r>r_0$. Additionally, we will keep the condition $\lambda_\phi\lambda_\pi/\sqrt{(2\pi)^3}\sigma^3=\pi/4$ relating the coupling constants $\lambda_\phi$ and $\lambda_\pi$ to the size of the detector $\sigma$, which we recall was necessary in order for Alice to be able to perfectly transmit her quantum message to (a single) Bob.

Having specified the initial quantum states as well as the smearing functions of Alice and both Bobs, we can now proceed to numerically compute the density matrices $\hat\rho_{\textsc{cb}_1}$ and $\hat\rho_{\textsc{cb}_2}$ associated with each Bob, as given by Eq.~\eqref{eq:rho_cb_2}. Then, via Eq.~\eqref{eq:coherent_info}, we can compute the coherent information associated with the channel from Alice to Bob $B_1$, and similarly for Bob $B_2$. The results are shown in Fig.~\ref{fig:C_vs_r0} for two choices of parameters: $\lambda_\phi/\sigma=10$ and $\lambda_\phi/\sigma=1000$.

\begin{figure}
    \centering
    \includegraphics[width=\linewidth]{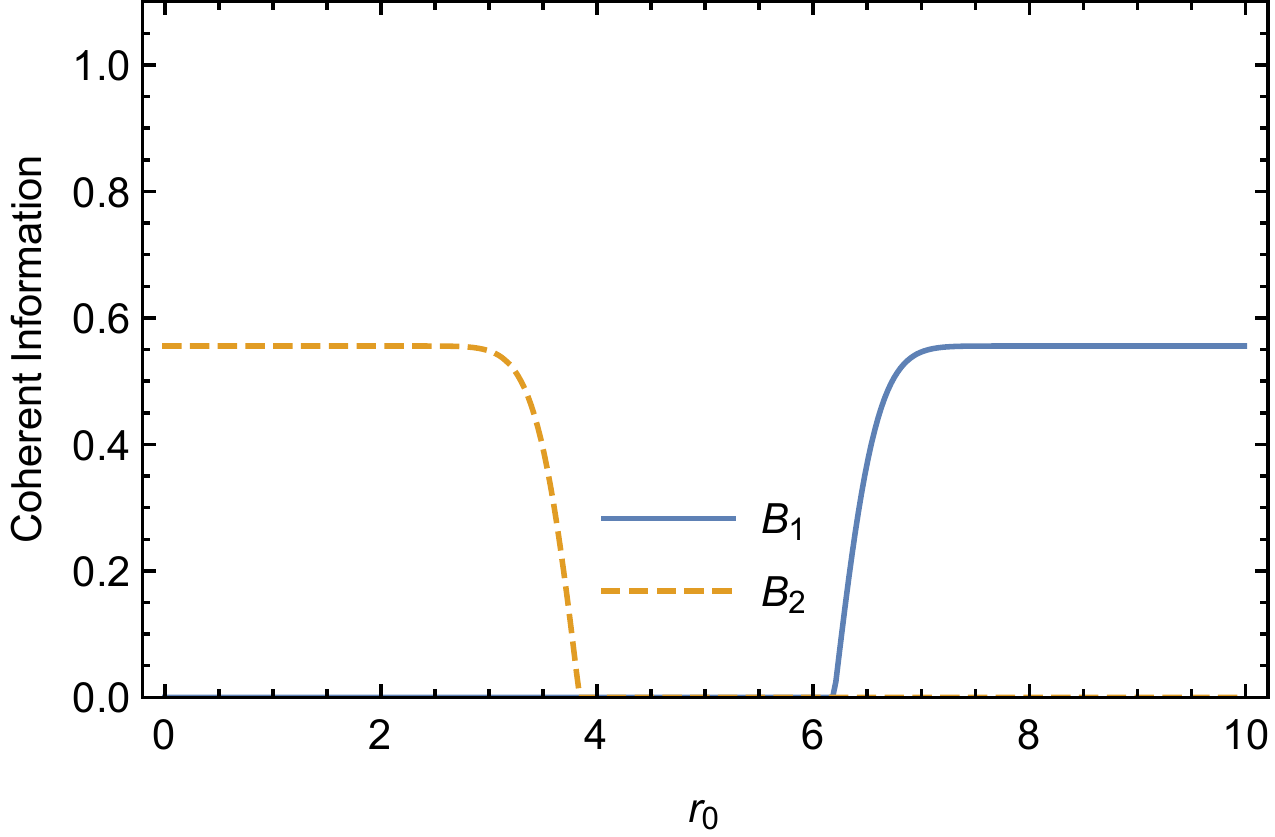}
    \includegraphics[width=\linewidth]{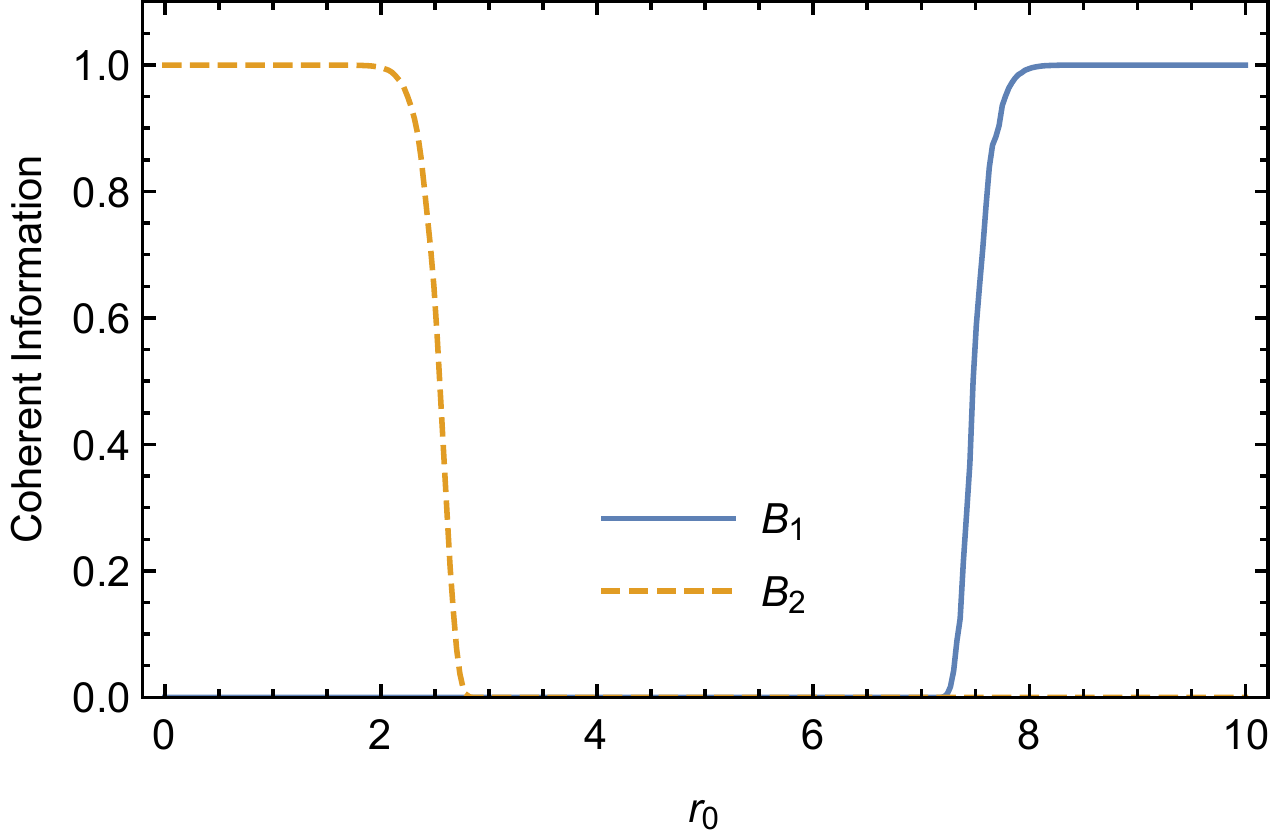}
    \caption{Coherent information versus $r_0$ (the radial separation between Bob $B_1$ and $B_2$) with $\lambda_\phi=10$ (\textbf{top}) and $\lambda_\phi=1000$ (\textbf{bottom}). We set $\sigma=1$ for both plots. Notice that for both choices of parameters, and for any choice of $r_0$, it is not possible for Alice to simultaneously send coherent information to both Bobs.}
    \label{fig:C_vs_r0}
\end{figure}

There are a few interesting points to note regarding Fig.~\ref{fig:C_vs_r0}. First, notice that for small enough $r_0$, Alice can send quantum information to Bob $B_2$ (but not $B_1$). This makes sense, since, as can be seen in Fig.~\ref{fig:two_bobs_diagram}, a small enough value of $r_0$ means that Bob $B_2$ has access to the entire lightcone of Alice, and thus he can recover the full quantum message (which, in $(3+1)$-dimensions propagates on the lightcone). Similarly, for large enough $r_0$ Alice can send quantum information to Bob $B_1$, but not $B_2$.

However, for either of the parameter ratios $\lambda_\phi/\sigma$, it is not possible for Alice to simultaneously broadcast her quantum message to both Bobs, regardless of the value we take for the radius $r_0$ which defines the separation of $B_1$ and $B_2$. In fact we numerically verified that there is no possible value of ratio $\lambda_\phi/\sigma$ which allows Alice to simultaneously broadcast coherent information to both Bobs. This therefore extends the no-quantum-broadcasting result proven in Ref.~\cite{Jonsson2018} for identical detectors to the case of spherically symmetric, non-identical detectors, and it therefore gives supporting evidence to the conjecture that it is not possible to send quantum information through a quantum field to multiple disjoint detectors, identical or not.

Another interesting feature to note in Fig.~\ref{fig:C_vs_r0} is the effect that increasing the ratio $\lambda_\phi/\sigma$ from 10 to 1000 has on the coherent information $I_c(\hat\rho_{\textsc{a},0},\Xi)$ that Alice can transmit to the two Bobs. Namely, we see that for the smaller value of $\lambda_\phi/\sigma$ Alice can transmit coherent information to both Bobs for a larger range of values of $r_0$, but for either Bob $I_c(\hat\rho_{\textsc{a},0},\Xi)$ never exceeds $0.6$. On the other hand, if the coupling strength $\lambda_\phi$ is increased relative to $\sigma$, then there is a smaller range of $r_0$ values for which either Bob can receive coherent information, but in the best case scenario (large $r_0$ in the case of Bob $B_1$ and small $r_0$ in the case of Bob $B_2$) the Bobs can receive the maximum value of coherent information, $I_c(\hat\rho_{\textsc{a},0},\Xi)=1$. In other words, there is a ``rich-get-richer, poor-get-poorer" type of trade-off associated with increasing the ratio $\lambda_\phi/\sigma$: large Bobs will be able to receive more quantum information, at the expense of smaller Bobs not being able to receive any.

We can understand this trade-off on physical grounds, as follows. First of all, we know from our discussion in Sec.~\ref{sec:perfect_quantum_channel} that in order for Alice to perfectly transmit her quantum information to a single Bob the strong coupling condition \eqref{eq:strong_coupling_condition} must be satisfied, which in $(3+1)$-dimensions is given by Eq.~\eqref{eq:strong_coupling_condition_2}: $\lambda_\phi\gg\sigma$. Hence it is not surprising that if the ratio $\lambda_\phi/\sigma$ is increased, then a large Bob $B_1$ or $B_2$ --- who would approximate the single, ideal Bob considered in the previous sections --- would be able to receive more coherent information from Alice. 

Furthermore, it also makes intuitive sense that a larger coupling $\lambda_\phi$ would make it more difficult for smaller, less-than-ideal Bobs to receive quantum information from Alice. 

To understand this, first recall from Sec.~\ref{sec:perfect_quantum_channel} the physics of our perfect quantum channel from Alice to Bob. The first step to the quantum channel consists of Alice encoding her qubit into coherent states of the field, which, for larger values of $\lambda_\phi$ are increasingly more and more orthogonal to one another. Then, Bob attempts to recover the message by performing the DECODE gate between his qubit and the field, as shown schematically in Fig.~\ref{fig:Xi}. The DECODE gate, defined as the inverse to the ENCODE gate, first entangles Bob's qubit with the coherent field states, and then attempts to disentangle the field so that Alice's qubit state is coherently transmitted to Bob. However, in order for this final disentangling step to be performed successfully, Bob must have access to the entire quantum message sent out by Alice --- i.e. Bob must have access to the entirety of Alice's lightcone in $(3+1)$-dimensional flat spacetime.

However, in this section we are manifestly considering the scenario where a less-than-ideal Bob ($B_1$ or $B_2$) does \textit{not} have access to the entirety of Alice's lightcone, and hence in his decoding process he will \textit{not} be able to completely disentangle his qubit from the field. Hence, following the decoding procedure the field carries partial knowledge of Bob's state, i.e. Alice's state, which she hoped to transmit to Bob. In other words, a portion of Alice's message will remain in the field, and hence, by a no-cloning type of intuition, the full message cannot get transmitted to this Bob. Now recall that the amount of overlap between the coherent field components appearing in the entangled state \eqref{entangled} determines how much Bob remains entangled with the field. This amount of overlap is in turn determined by the coupling intensity. Hence, a larger value of the coupling $\lambda_\phi$, which ensures greater orthogonality between the coherent field states, requires Bob to cover a larger portion of Alice's light cone in order to receive her quantum message.

As a final note, it can be argued that some optical quantum channels are implemented not from a fundamental light-matter isotropic coupling, but rather with highly directional light sources. Indeed, our results show that if Alice instead wanted to send her quantum message to a single Bob localized in some specified solid angle $\Omega<4\pi$ relative to her, it would be much more prudent for her to change the way in which she couples to the field, so that it is not isotropic, but rather so that she only couples to those field modes with wavevectors pointing in Bob's direction (e.g., using mirrors to collimate the signal). In this way the quantum information that Alice encodes in the field would only travel towards Bob and not in all directions, and Bob would be able to recover more of the quantum message, rather than only a small fraction. One would yet expect timelike leakage of the information in this case in spacetimes where strong Huygens is violated, but we leave the study of such non-isotropic couplings for a future work.

\section{Summary and Conclusions}
\label{sec:conclusions}

We studied how a relativistic quantum field can be used to transmit quantum information between spacetime observers Alice and Bob, who couple to the field via particle detectors. 
%{\red Our analysis is applicable to flat spacetimes of any dimension, and in this sense is a generalization of Ref.~\cite{Jonsson2018},} where a similar setup was studied in simplest case of $(1+1)$-dimensions.

When it comes to quantum communication through a field (e.g., the electromagnetic field) one may expect that quantum information can be transferred from an atom to the field and from the field to another atom as long as the transmission is light-like. However, we showed that this naive assumption is not true, and that the question of `where does the quantum information go' in time and space, when it is encoded into a quantum field from an atomic system, is non-trivial.

Namely, we have shown that in spacetime, the quantum information that is originally contained in a particle detector (e.g., a qubit encoded in an atomic system) is transmitted following a linear coupling between the detector and a quantum field. We also have shown that the quantum information localization and propagation through the field via the usual light-matter coupling displays unintuitive features. 
%For example, in flat $(3+1)$-dimensional spacetime, a receiver Bob would have to place agents everywhere on the boundary of the emitter's (Alice's) lightcone in order to guarantee that the quantum channel from emitter to receiver has maximal quantum channel capacity. Furthermore, in spacetimes that do not satisfy the strong Huygens principle (i.e. $(1+1)$-dimensional or ($2n+1$)-dimensional flat spacetimes, and most curved spacetimes~\cite{McLenaghan1974}), Bob would need to place agents in the full-lightcone of Alice (timelike and spacelike area) to get full capacity, even if the field is massless. 
Furthermore, we have quantified ---by means of the quantum channel capacity between Alice and Bob--- how the localization and spatial profile of Bob affects his ability to recover the quantum information left in the field by Alice.

Concretely, we began by constructing a perfect field-mediated quantum channel from Alice to Bob, i.e., one for which the quantum channel capacity is the theoretically maximum value. The channel can be implemented by Alice first coupling to the field via a local unitary $\Ua$, which serves to encode Alice's qubit into the field, followed by Bob coupling to the field via a local unitary $\Ub$, which decodes the qubit from the field and onto Bob's detector. 

The unitaries $\Ua$ and $\Ub$ defining our quantum channel are each generated by interaction Hamiltonians that couple Alice and Bob's detectors to the field only at discrete instants in time, and hence allow for a non-perturbative approach to the problem of time-evolution. Indeed, such a non-perturbative approach is necessary, since, as we showed, the field-mediated quantum channel from Alice to Bob can only be a perfect quantum channel if the observers are strongly (i.e. non-perturbatively) coupled to the field. 

In particular, the unitaries $\Ua$ and $\Ub$ in our construction, each take the form of a product of \textit{two} simple-generated (i.e. rank-1 generated) unitaries. We showed that these are the simplest possible unitaries leading to a quantum channel with a non-zero quantum capacity. That is, if either $\Ua$ or $\Ub$ consists of a \textit{single} rank-1 unitary, then the channel from Alice to Bob necessarily has zero quantum capacity. In this sense, the channel which we construct is the simplest possible field-mediated quantum channel from Alice to Bob with a non-zero quantum capacity.

Following our mathematical construction of the simplest possible perfect quantum channel (which upper-bounds the performance of any other possible channel), we used it to better understand how quantum information propagates through a relativistic quantum field. In particular, we asked the following question: If Alice encodes a quantum message into a quantum field at time $t_\textsc{a}$ by coupling to the field in a spatial region characterized by the smearing function $F_\textsc{a}(\bm x)$, then where in space does Bob have to be located at time $t_\textsc{b}>t_\textsc{a}$ in order to fully receive Alice's message?

We answered this question by showing that if Bob wants to guarantee that he fully receives Alice's quantum message, then he must have access to the region of space containing the supports of a set of smearing functions $F_{\textsc{b}i}(\bm x)$, for $i\in\{1,2,3\}$. These smearing functions are defined in terms of Alice's smearing $F_\textsc{a}(\bm x)$ and the time difference $\Delta=t_\textsc{b}-t_\textsc{a}$, and they completely characterize the flow of quantum information through a Klein-Gordon field of arbitrary mass $m$ in a flat spacetime of arbitrary dimension.

To better understand this general result, we then considered the particular cases of quantum information propagation through massless fields in $(2+1)$- and $(3+1)$-dimensional flat spacetimes. In $(3+1)$-dimensions we found that Bob can fully recover Alice's quantum message if he has access to her future light cone, which allowed us to conclude that in this spacetime quantum information propagates at the speed of light through the massless field. On the other hand, in the $(2+1)$-dimensional case we found that Bob additionally must have access to the full interior of Alice's lightcone in order to recover the entire message. Hence, in $(2+1)$-dimensional flat spacetime quantum information propagates subluminally through a massless field, despite the fact that the field quanta travel at the speed of light.

While this latter result may at first seem surprising, it can be understood by studying the validity of the strong Huygens principle. Indeed, as is well known, the strong Huygens principle does not hold in most spacetimes --- including even dimensional Minkowski spaces --- and in principle, information can propagate slower than light in these spacetimes~\cite{McLenaghan1974}. While this has previously been extensively investigated for \textit{classical} information transmission~\cite{Jonsson2015,Blasco2015,Blasco2016,Simidzija2017}, our work presented here is, to our knowledge, the first study of the effects of strong Huygens violations on \textit{quantum} information transmission.

Having understood where in space an ideal Bob needs to be located in order to perfectly receive the quantum message that Alice sends through the field, we considered the less-than-ideal situation where Bob only covers a part of the spacetime region in which Alice's message lives. This situation is interesting from the perspective of quantum information broadcasting, a setup in which Alice hopes to simultaneously transmit at least a part of her quantum message to multiple disjoint Bobs. While the no-cloning theorem~\cite{Wootters1982} precludes a perfect transmission of quantum information to multiple receivers, there appears, a priori, no reason to suspect that at least a small amount of quantum information could not be recovered by each of the Bobs.

However, as was shown in Ref.~\cite{Jonsson2018}, it is in fact impossible for Alice to broadcast \textit{any} amount of quantum information to multiple \textit{identical} Bobs, a result that was proven for any spacetime dimension by noting that the quantum channel from Alice to any such Bob is what is called anti-degradable~\cite{Holevo2008}. Nevertheless this still leaves open the possibility for broadcasting quantum information to multiple, \textit{non-identical}, disjoint Bobs, which we proceeded to study.

More concretely, we considered the case of two spherically symmetric Bobs, $B_1$ and $B_2$, covering the regions of $(3+1)$-dimensional space given by $|\bm x|<r_0$ and $|\bm x|\ge r_0$ (with $r_0$ some fixed radius), attempting to recover the quantum information sent out via a massless quantum field by an emitter Alice located at $\bm x=0$. (The setup is depicted in Fig.~\ref{fig:two_bobs_diagram}.) We found that, regardless of the choice of setup parameters --- such as the separation radius $r_0$ and the field coupling strength $\lambda_\phi$ of the detectors to the quantum field --- it is not possible for Alice to simultaneously broadcast a non-zero amount of coherent information (a lower bound on the quantum channel capacity) to both Bobs. This gives support to the conjecture that it is not possible for Alice to broadcast quantum information to multiple disjoint Bobs, identical or not. 

It should be emphasized that the strong localization requirements imposed on Bob who wants to recover the entirety of Alice's message are strongly related  to the assumption that the coupling of Alice to the field is isotropic. In other words, if Alice couples isotropically to the field, it is fully expected that her quantum message will propagate symmetrically in all spatial directions away from the point of emission, and from the prior studies of classical communication through quantum fields~\cite{Jonsson2015,Blasco2015,Jonsson2016,Simidzija2017}, it is perhaps not even surprising that Alice's quantum message also propagates in the interior of her light cone in strong Huygens violating spacetimes. However, there is a crucial finding that the present work brings to light which is a vital distinction between the propagation of classical and quantum information through a quantum field. Unlike in the classical case, a receiver Bob who wants to receive Alice's entire quantum message \textit{must} couple to the field in the entire spacetime region in which the message is located. Hence, while a classical information receiving Bob will find it beneficial that Alice's message is so delocalized in space, since he can fully recover its content by coupling to \textit{any} region of spacetime containing the message (even, perhaps, a region in timelike separation from Alice), a quantum information receiving Bob will find this same feature to be a hindrance, since the recovery of the the quantum message requires him to have access to the \textit{entire} region of spacetime containing the message. Of course, Alice could ammeliorate this issue for the latter Bob by imposing a directionality in her coupling to the field, so that the her message propagates preferentially towards the intended receiver, but she would still not be able to prevent the leakage of her message inside the timelike area of her future light cone, which is solely a consequence of strong Huygens violations in certain spacetimes and is out of Alice's control.

Our study of quantum information broadcasting also led to an interesting result relating the coupling strength $\lambda_\phi$ of Alice and Bob's detectors to the quantum field and the minimum size that a given Bob must be in order to receive at least a part of Alice's quantum message. Namely, we showed that there is a ``rich-get-richer, poor-get-poorer" type of trade-off associated with increasing the coupling strength $\lambda_\phi$, whereby very large Bobs are able to receive more quantum information from Alice, at the cost of smaller Bobs not being able to receive any. 

Physically, this trade-off arises due to the fact that an increased coupling $\lambda_\phi$ ensures that Alice's qubit is stored more coherently in the field, and hence a receiver Bob who has access to the entire portion of the field containing the qubit can better recover the qubit using his own detector. The downside of such a highly coherent encoding of Alice's qubit into the field however, is that if Bob is not able to fully access the region of the field containing the qubit, then a significant portion of Alice's message will remain in the field after Bob attempts to recover it. And since the no-cloning theorem makes it impossible for Alice's state to be simultaneously encoded in both the field and Bob's detector, we can thus understand intuitively why a spatially limited Bob would struggle to receive quantum information from Alice if $\lambda_\phi$ is large.

Finally, let us mention that it should be very interesting to generalize the present results, for example, to the case of multiple senders and multiple receivers. There, generalizing the study in \cite{QSWAhmadzadegan} from classical to quantum information, it should be possible to show that not only the classical channel capacity but also the quantum channel capacity can be modulated and enhanced by beam shaping by suitably pre-entangling the emitters.

\acknowledgements

P.S. acknowledges the support of the NSERC CGS-M Scholarship and Ontario Graduate Scholarship. A.K. and E.M.M. acknowledge support through the NSERC Discovery Program.  E.M.M. also acknowledges funding through his Ontario Early Researcher Award. A.K. acknowledges support through a Google Faculty Research Award.

\onecolumngrid
\appendix

\section{Some technical quantum information results}
\label{Appendix:quantum_info_results}

Here we some more technical quantum information theory results, which we make use of throughout the main text.

\begin{mydef}
Let $\hat\rho_\textsc{cb}$ be a state on $\mathcal H_\textsc{c}\otimes\mathcal H_\textsc{b}$. The \emph{conditional quantum entropy} $S(C|B)_{\hat\rho_\textsc{cb}}$ is defined as
\begin{equation}
\label{eq:conditional_entropy}
    S(C|B)_{\hat\rho_\textsc{cb}}
    :=
    S(\hat\rho_\textsc{cb})
    -
    S(\hat\rho_\textsc{b}),
\end{equation}
where $S(\cdot)$ denotes the von Neumann entropy and $\hat\rho_\textsc{b}:=\tr_\textsc{c}\hat\rho_\textsc{cb}$.
\end{mydef}
Note that, with this definition, the coherent information $I_c(\hat\rho_{\textsc{a},0},\Xi)$ of a quantum channel $\Xi$ from A to B and the input state $\hat\rho_{\textsc{a},0}$, as defined by Eq.~\eqref{eq:coherent_info}, can be written as
\begin{equation}
    \label{eq:coherent_info_2}
    I_c(\hat\rho_{\textsc{a},0},\Xi)
    =
    -S(C|B)_{\hat\rho_\textsc{cb}},
\end{equation}
where, recall, $\hat\rho_\textsc{cb}$ is the output of the channel $\mathds 1_\textsc{c}\otimes\Xi$ acting on a purification of $\hat\rho_{\textsc{a},0}$.

\begin{lemma}
\label{lemma:concavity}
The function taking the input $\hat\rho_\textsc{cb}$ and producing the output $S(C|B)_{\hat\rho_\textsc{cb}}$ is a concave function, i.e. 
\begin{equation}
    S(C|B)_{\lambda\hat\rho_1+(1-\lambda)\hat\rho_2}
    \ge
    \lambda S(C|B)_{\hat\rho_1}
    +
    (1-\lambda)S(C|B)_{\hat\rho_2},
\end{equation}
for any $0\le\lambda\le 1$ and states $\hat\rho_1$ and $\hat\rho_2$ on $\mathcal H_\textsc{c}\otimes\mathcal H_\textsc{b}$.
\end{lemma}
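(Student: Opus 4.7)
The plan is to prove concavity of $S(C|B)_{\hat\rho_{\textsc{cb}}}$ by lifting the convex combination into an auxiliary classical register and then applying strong subadditivity (in the equivalent form that conditioning on an additional system can only reduce the conditional entropy). First, I would introduce a two-dimensional ``flag'' Hilbert space $\mathcal H_\textsc{x}$ with orthonormal basis $\{\ket 0,\ket 1\}$ and consider the classical-quantum state on $\mathcal H_\textsc{c}\otimes\mathcal H_\textsc{b}\otimes\mathcal H_\textsc{x}$ defined by
\begin{equation}
    \hat\rho_\textsc{cbx}
    :=
    \lambda\,\hat\rho_1\otimes\ket 0\bra 0_\textsc{x}
    +(1-\lambda)\hat\rho_2\otimes\ket 1\bra 1_\textsc{x}.
\end{equation}
By construction, tracing out $\textsc{x}$ gives $\tr_\textsc{x}\hat\rho_\textsc{cbx}=\lambda\hat\rho_1+(1-\lambda)\hat\rho_2$, so it provides an explicit purification-like extension of the mixture.

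Next, I would compute $S(\textsc{c}|\textsc{bx})_{\hat\rho_\textsc{cbx}}$ directly using the block-diagonal structure of $\hat\rho_\textsc{cbx}$ and $\hat\rho_\textsc{bx}:=\tr_\textsc{c}\hat\rho_\textsc{cbx}$. Since both density matrices are block-diagonal in the $\textsc{x}$ basis with blocks of trace $\lambda$ and $(1-\lambda)$ respectively, the standard identity for the von Neumann entropy of a block-diagonal state gives
\begin{align}
    S(\hat\rho_\textsc{cbx})
    &=
    h(\lambda)+\lambda S(\hat\rho_1)+(1-\lambda)S(\hat\rho_2),
    \\
    S(\hat\rho_\textsc{bx})
    &=
    h(\lambda)+\lambda S(\hat\rho_{\textsc{b},1})+(1-\lambda)S(\hat\rho_{\textsc{b},2}),
\end{align}
where $h(\lambda)$ is the binary entropy and $\hat\rho_{\textsc{b},i}:=\tr_\textsc{c}\hat\rho_i$. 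Subtracting, the binary entropies cancel and we obtain
\begin{equation}
    S(\textsc{c}|\textsc{bx})_{\hat\rho_\textsc{cbx}}
    =
    \lambda\, S(\textsc{c}|\textsc{b})_{\hat\rho_1}
    +(1-\lambda)\,S(\textsc{c}|\textsc{b})_{\hat\rho_2}.
\end{equation}

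The final step is to invoke strong subadditivity of the von Neumann entropy in the form $S(\textsc{c}|\textsc{bx})\le S(\textsc{c}|\textsc{b})$, which is the statement that discarding the register $\textsc{x}$ cannot decrease the conditional entropy of $\textsc{c}$ given $\textsc{b}$. Applied to $\hat\rho_\textsc{cbx}$ and its partial trace $\hat\rho_\textsc{cb}=\lambda\hat\rho_1+(1-\lambda)\hat\rho_2$, this yields
\begin{equation}
    \lambda S(\textsc{c}|\textsc{b})_{\hat\rho_1}
    +(1-\lambda)S(\textsc{c}|\textsc{b})_{\hat\rho_2}
    =
    S(\textsc{c}|\textsc{bx})_{\hat\rho_\textsc{cbx}}
    \le
    S(\textsc{c}|\textsc{b})_{\lambda\hat\rho_1+(1-\lambda)\hat\rho_2},
\end{equation}
which is precisely the claimed concavity.

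The only nontrivial input is strong subadditivity itself, which is a deep but by now standard theorem of Lieb and Ruskai; I would cite it rather than reprove it. The main obstacle, if any, is purely bookkeeping: making sure that the block-diagonal entropy computation is stated cleanly (so that the $h(\lambda)$ terms cancel without needing to invoke any extra inequality) and that the direction of the SSA inequality is the one that yields concavity (rather than convexity). Everything else is direct.
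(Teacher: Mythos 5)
Your proof is correct and follows essentially the same route as the paper: introduce a classical flag register carrying the mixture, evaluate the block-diagonal entropies, and invoke strong subadditivity (your form $S(\textsc{c}|\textsc{bx})\le S(\textsc{c}|\textsc{b})$ is exactly the inequality $S(\hat\rho_\textsc{cbe})+S(\hat\rho_\textsc{b})\le S(\hat\rho_\textsc{cb})+S(\hat\rho_\textsc{be})$ used in the paper). The only difference is cosmetic bookkeeping: you cancel the binary entropy $h(\lambda)$ in normalized conditional entropies, while the paper cancels the $\lambda\log_2\lambda$ terms arising from unnormalized blocks.
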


\begin{proof}
The proof presented here is inspired by the sketch of the proof in~\cite{Watrous2018}. We start by considering the state $\hat\rho_\textsc{cbe}$ on $\mathcal H_\textsc{c}\otimes\mathcal H_\textsc{b}\otimes\mathcal H_\textsc{e}$ defined by
\begin{equation}
    \hat\rho_\textsc{cbe}
    :=
    \lambda\hat\rho_1\otimes\ket0\bra0+(1-\lambda)\hat\rho_2\otimes\ket0\bra0,
\end{equation}
where $\{\ket0,\ket1\}$ forms an orthonormal basis of the auxiliary qubit space $\mathcal H_\textsc{e}$. Then, the strong subadditivity of the von Neumann entropy reads~\cite{Lieb1973}
\begin{equation}
    S(\hat\rho_\textsc{cbe})
    +
    S(\hat\rho_\textsc{b})
    \le
    S(\hat\rho_\textsc{cb})
    +
    S(\hat\rho_\textsc{be}),
\end{equation}
where $\hat\rho_\textsc{b}:=\tr_\textsc{ce}\hat\rho_\textsc{cbe}$, $\hat\rho_\textsc{cb}:=\tr_\textsc{e}\hat\rho_\textsc{cbe}$ and $\hat\rho_\textsc{be}:=\tr_\textsc{c}\hat\rho_\textsc{cbe}$. Then, noting that $\hat\rho_\textsc{cb}=\lambda\hat\rho_1+(1-\lambda)\hat\rho_2$ and making use of the definition \eqref{eq:conditional_entropy} for $S(C|B)_{\hat\rho_\textsc{cb}}$ we find
\begin{equation}
\label{eq:conditional_entropy_bound}
    S(C|B)_{\lambda\hat\rho_1+(1-\lambda)\hat\rho_2}
    \ge 
    S(\hat\rho_\textsc{cbe})
    -
    S(\hat\rho_\textsc{be}).
\end{equation}

Let us now evaluate $S(\hat\rho_\textsc{cbe})$. We obtain
\begin{align}
    S(\hat\rho_\textsc{cbe})
    &:=
    -\tr \hat\rho_\textsc{cbe}\log_2 \hat\rho_\textsc{cbe}\notag\\
    &=
    -\tr \hat\rho_\textsc{cbe}
    \left(
    \log_2(\lambda\hat\rho_1)\otimes\ket0\bra0
    +
    \log_2((1-\lambda)\hat\rho_2)\otimes\ket1\bra1
    \right)\notag\\
    &=
    -\tr\left(
    \lambda\hat\rho_1\log_2(\lambda\hat\rho_1)
    \otimes\ket0\bra0+
    (1-\lambda)\hat\rho_2\log_2((1-\lambda)\hat\rho_2\otimes\ket1\bra1
    \right)\notag\\
    &=
    -\tr\lambda\hat\rho_1\log_2(\lambda\hat\rho_1)
    -\tr(1-\lambda)\hat\rho_2\log_2((1-\lambda)\hat\rho_2)\notag\\
    &=
    S(\lambda\hat\rho_1)+S((1-\lambda)\hat\rho_2).
    \label{eq:S(rho_cbe)}
\end{align}
By an analogous calculation we find
\begin{equation}
    \label{eq:S(rho_be}
    S(\hat\rho_\textsc{be})
    =
    S(\lambda\tr_\textsc{c}\hat\rho_1)
    +
    S((1-\lambda)\tr_\textsc{c}\hat\rho_2).
\end{equation}
Then, combining Eqs.~\eqref{eq:conditional_entropy_bound}-\eqref{eq:S(rho_be} we obtain
\begin{align}
    S(C|B)_{\lambda\hat\rho_1+(1-\lambda)\hat\rho_2}
    &\ge 
    S(\lambda\hat\rho_1)+S((1-\lambda)\hat\rho_2)-S(\lambda\tr_\textsc{c}\hat\rho_1)
    -
    S((1-\lambda)\tr_\textsc{c}\hat\rho_2).
\end{align}
Using the identity $S(\lambda\hat\rho)=\lambda\log_2\lambda+\lambda S(\hat\rho)$, which is straightforwardly proven by working in the eigenbasis of $\hat\rho$, the above expression simplifies to
\begin{align}
    S(C|B)_{\lambda\hat\rho_1+(1-\lambda)\hat\rho_2}
    &\ge 
    \lambda
    \left[S(\hat\rho_1)-S(\tr_\textsc{c}\hat\rho_1)\right]
    +
    (1-\lambda)
    \left[S(\hat\rho_2)-S(\tr_\textsc{c}\hat\rho_2)\right].
\end{align}
Finally, using the definition \eqref{eq:conditional_entropy} for the conditional entropy $S(C|B)_{\hat\rho}$ we find
\begin{align}
    S(C|B)_{\lambda\hat\rho_1+(1-\lambda)\hat\rho_2}
    \ge
    \lambda S(C|B)_{\hat\rho_1}
    +
    (1-\lambda)S(C|B)_{\hat\rho_2},
\end{align}
which completes the proof.
\end{proof}
We can now prove a useful result regarding the coherent information $I_c(\hat\rho_{\textsc{a},0},\Xi)$.
\begin{lemma}
Let $\Xi$ be a quantum channel from states on $\mathcal H_\textsc{a}$ to states on $\mathcal H_\textsc{b}$, let $\hat\rho_{\textsc{a},0}$ be a state on $\mathcal H_\textsc{a}$, and let $\hat\rho_\textsc{cb}$ be the output of the channel $\mathds 1_\textsc{c}\otimes\Xi$ applied on the purification of $\hat\rho_{\textsc{a},0}$. Then, $I_c(\hat\rho_{\textsc{a},0},\Xi)\le 0$ if $\hat\rho_\textsc{cb}$ is separable.
\end{lemma}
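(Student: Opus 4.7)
The plan is to unpack the definition of coherent information in terms of conditional quantum entropy and then exploit the concavity result proven just above. Using Eq.~\eqref{eq:coherent_info_2}, the statement $I_c(\hat\rho_{\textsc{a},0},\Xi)\le 0$ is equivalent to $S(C|B)_{\hat\rho_\textsc{cb}}\ge 0$, so the entire task reduces to proving nonnegativity of the conditional entropy on separable states.

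First I would write the separable state $\hat\rho_\textsc{cb}$ in its convex decomposition,
\begin{equation}
    \hat\rho_\textsc{cb}=\sum_i p_i\,\hat\sigma_\textsc{c}^{(i)}\otimes\hat\tau_\textsc{b}^{(i)},
\end{equation}
with $p_i\ge 0$ and $\sum_i p_i=1$. For each product term, I would use additivity of the von Neumann entropy on tensor products together with Eq.~\eqref{eq:conditional_entropy}:
\begin{equation}
    S(C|B)_{\hat\sigma^{(i)}\otimes\hat\tau^{(i)}}
    =S(\hat\sigma^{(i)}\otimes\hat\tau^{(i)})-S(\hat\tau^{(i)})
    =S(\hat\sigma^{(i)})\ge 0,
\end{equation}
since the von Neumann entropy of any density operator is nonnegative.

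Next I would invoke Lemma~\ref{lemma:concavity}, which extends straightforwardly from two-term convex combinations to arbitrary ones by induction, to bound
\begin{equation}
    S(C|B)_{\hat\rho_\textsc{cb}}
    \ge\sum_i p_i\,S(C|B)_{\hat\sigma^{(i)}\otimes\hat\tau^{(i)}}
    =\sum_i p_i\,S(\hat\sigma^{(i)})
    \ge 0.
\end{equation}
Combining this with Eq.~\eqref{eq:coherent_info_2} gives $I_c(\hat\rho_{\textsc{a},0},\Xi)=-S(C|B)_{\hat\rho_\textsc{cb}}\le 0$, which is the claim.

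There is essentially no real obstacle here: the heavy lifting (strong subadditivity implying concavity of the conditional entropy) was done in Lemma~\ref{lemma:concavity}. The only mildly subtle point is the extension of the two-term concavity inequality to an arbitrary convex combination, which is a routine induction on the number of terms and worth mentioning but not belaboring.
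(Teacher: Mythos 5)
Your proof is correct and follows essentially the same route as the paper: reduce to nonnegativity of $S(C|B)_{\hat\rho_\textsc{cb}}$ via Eq.~\eqref{eq:coherent_info_2}, decompose the separable state into product terms, and apply the concavity established in Lemma~\ref{lemma:concavity}. The only (immaterial) differences are that the paper decomposes into products of \emph{pure} states, so each term has conditional entropy exactly zero rather than $S(\hat\sigma^{(i)})\ge 0$, and that you make explicit the routine induction extending the two-term concavity to arbitrary convex combinations, which the paper uses implicitly.
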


\begin{proof}
Assume $\hat\rho_\textsc{cb}$ is separable. Then, it is possible to find pure states $\ket{b_i}\in\mathcal H_\textsc{b}$ and $\ket{c_i}\in\mathcal H_\textsc{c}$ along with real numbers $p_i>0$ such that
\begin{equation}
    \hat\rho_\textsc{cb}
    =
    \sum_i p_i
    \ket{c_i}\bra{c_i}\otimes
    \ket{b_i}\bra{b_i}.
\end{equation}
From Eq.~\eqref{eq:coherent_info_2} we have $I_c(\hat\rho_{\textsc{a},0},\Xi)=-S(C|B)_{\hat\rho_\textsc{cb}}$ and hence from Lemma~\ref{lemma:concavity} we find
\begin{align}
\label{eq:coherent_info_bound}
    -I_c(\hat\rho_{\textsc{a},0},\Xi)
    \ge 
    \sum_i p_i
    S(C|B)_{\ket{c_i b_i}\bra{c_i b_i}},
\end{align}
where $\ket{c_i b_i}:=\ket{c_i}\otimes\ket{b_i}$ are pure, separable states on $\mathcal H_\textsc{c}\otimes\mathcal H_\textsc{b}$. Since $S(\ket{c_i b_i})=S(\ket{b_i})=0$ we see from Eq.~\eqref{eq:conditional_entropy} that $S(C|B)_{\ket{c_i b_i}\bra{c_i b_i}}=0$, and hence Eq.~\eqref{eq:coherent_info_bound} reads
\begin{align}
    -I_c(\hat\rho_{\textsc{a},0},\Xi)
    \ge
    0,
\end{align}
which completes the proof.
\end{proof}

\twocolumngrid
\bibliography{references}
\bibliographystyle{apsrev4-1}

% Note however that the converse of this lemma is not true. For example, consider the density matrix
% \begin{equation}
% \hat\rho_\textsc{cb}
% =
%     \begin{pmatrix}
%         0.14 & 0.11-0.12\ii & -0.07+0.05 \ii &0.04-0.006\ii \\
%         0.11+0.12\ii & 0.56 & -0.27-0.09\ii &0.067-0.054\ii\\
%         -0.07-0.05\ii&-0.27+0.09\ii&0.24&0.002+0.05\ii\\
%         0.04+0.006\ii&0.067+0.054\ii&0.002-0.05\ii&0.06
%     \end{pmatrix},
% \end{equation}
% on the two qubit Hilbert space $\mathcal H_\textsc{c}\otimes\mathcal H_\textscb$, which we assume is written in the computational basis. The partial transpose of this denisty matrix has a negative eigenvalue, which implies that $\hat\rho_\textsc{cb}$ is entangled.

% 

\end{document}